\declaretheorem[style=theorem,name=Observation,refname={observation,observations},Refname={Observation,Observations}]{myobservation}
\newcommand\msa{\textsf{MSA}\xspace}
\newcommand\msas{\textsf{MSA}s\xspace}
\newcommand\msamn{\textsf{MSA}$[1..m,1..n]$\xspace}
\newcommand\edses{\textsf{EDS}es\xspace}
\newcommand\efg{\textsf{EFG}\xspace}
\newcommand\efgs{\textsf{EFG}s\xspace}
\newcommand\msaij[2]{\mathsf{MSA}[#1,#2]\xspace}
\newcommand{\gap}{\mathtt{-}}
\newcommand\spell{\mathsf{spell}\xspace}
\newcommand\gst{$\mathsf{GST}_\mathsf{MSA}$\xspace}
\newcommand\atrue{\mathrm{\mathbf{true}}\xspace}
\newcommand\afalse{\mathrm{\mathbf{false}}\xspace}
\newcommand{\leftmostleaf}{\mathrm{leftmostleaf}}
\newcommand{\rightmostleaf}{\mathrm{rightmostleaf}}
\newcommand{\nextleaf}{\mathrm{nextleaf}}
\newcommand{\prevleaf}{\mathrm{prevleaf}}
\newcommand{\suffixlink}{\mathrm{suffixlink}}
\DeclareMathOperator{\cchar}{char}
\DeclareMathOperator{\sstring}{string}
\DeclareMathOperator{\parent}{parent}
\DeclareMathOperator{\stringdepth}{stringdepth}
\title{Linear Time Construction of \\ Indexable Elastic Founder Graphs}
\author{Nicola Rizzo\orcidID{0000-0002-2035-6309} \and Veli Mäkinen\orcidID{0000-0003-4454-1493}}
\institute{Department of Computer Science, University of Helsinki, Finland \\ \email{\{nicola.rizzo,veli.makinen\}@helsinki.fi}}
\date{January 2022}
\begin{document}

\maketitle

\begin{abstract}
Pattern matching on graphs has been widely studied lately due to its importance in genomics applications. Unfortunately, even the simplest problem of deciding if a string appears as a subpath of a graph admits  a quadratic lower bound under the Orthogonal Vectors Hypothesis (Equi et al. ICALP 2019, SOFSEM 2021).
To avoid this bottleneck, the research has shifted towards more specific graph classes, e.g. those induced from multiple sequence alignments (\msas).
Consider segmenting $\msa[1..m,1..n]$ into $b$ blocks
$\msa[1..m,1..j_1]$, $\msa[1..m,j_1+1..j_2]$, $\ldots$, $\msa[1..m,j_{b-1}+1..n]$.
The distinct strings in the rows of the blocks, after the removal of gap symbols, form the nodes of an \emph{elastic founder graph} (\efg) where the edges represent the original connections observed in the $\msa$.
An \efg is called \emph{indexable} if a node label occurs as a prefix of only those paths that start from a node of the same block.
Equi et al.\ (ISAAC 2021) showed that such \efgs support fast pattern matching
and gave an $O(mn \log m)$-time algorithm for preprocessing the \msa in a way that allows the construction of indexable \efgs maximizing the number of blocks and, alternatively, minimizing the maximum length of a block, in $O(n)$ and $O(n \log\log n)$ time respectively.
Using the suffix tree and solving a novel ancestor problem on trees, we improve the preprocessing to $O(mn)$ time and the $O(n \log \log n)$-time \efg construction to $O(n)$ time, thus showing that both types of indexable \efgs can be constructed in time linear in the input size.
\keywords{multiple sequence alignment \and pattern matching \and data structures \and segmentation algorithms \and dynamic programming \and suffix tree}
\end{abstract}

\subsubsection*{Acknowledgements}
This project has received funding from the European Union’s Horizon
2020 research and innovation programme under the Marie \\ Sk\l{}odowska-Curie
grant agreement No 956229.

\section{Introduction}

Searching strings in a graph has become a central problem along with the development of high-throughput sequencing techniques. Namely, thousands of human genomes are now available, forming a so-called \emph{pangenome} of a species \cite{marschall2016computational}. Such pangenome can be used for enhancing various analysis tasks that have previously been conducted with a single reference genome \cite{MNSV09jcb,Sch09,SVM14,Garetal18,hisat2,graphtyper2,Norrietal21}.
The most popular representation for a pangenome is a graph, whose paths spell the input genomes. The basic primitive required on such pangenome graphs is to be able to search occurrences of query strings (short reads) as subpaths of the graph. Unfortunately, even finding exact matches of a query string of length $q$ in a graph with $e$ edges cannot be done significantly faster than $O(qe)$ time unless the Orthogonal Vectors Hypothesis (OVH) is false \cite{EGMT19}. Therefore, practical tools deploy various heuristics or use other pangenome representations as a basis. 

\begin{figure}
    \centering
    \includegraphics{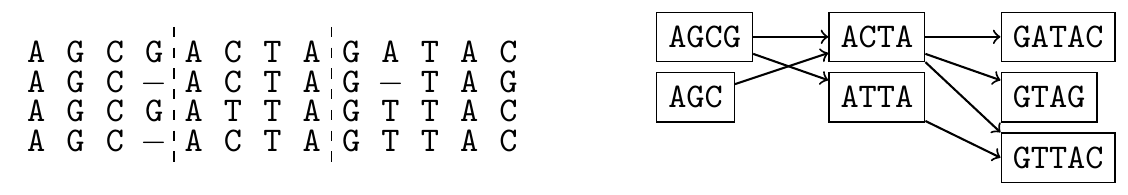}
    \caption{An indexable elastic founder graph induced from a segmentation of an \msa.
    The example is adapted from Equi et al.~\cite{Equietal21}.}\label{fig:segmentation}
\end{figure}

Due to the difficulty of string search in general graphs, M\"akinen et al.~\cite{MCENT20} and Equi et al.~\cite{Equietal21} studied graphs induced from multiple sequence alignments (\msas), as we describe in \Cref{sect:definitions}. Any segmentation of an \msa naturally induces a graph consisting of nodes partitioned into blocks with edges connecting consecutive blocks. Such \emph{elastic founder graph} (\efg) is illustrated in Figure~\ref{fig:segmentation}. The key observation is that if the resulting node labels do not appear as a prefix of any other path than those starting at the same block, then there is an index structure for the graph that supports fast pattern matching \cite{MCENT20,Equietal21}. Equi et al.~\cite{Equietal21} also show that some such indexability property is required as the OVH-based lower bound holds even on these induced graphs. M\"akinen et al.~\cite{MCENT20} gave an $O(mn)$ time algorithm to construct an indexable \efg with minimum maximum block length, given a gapless \msamn. Equi et al.~\cite{Equietal21} extended the result to general \msas. They obtained an $O(mn \log m)$-time preprocessing algorithm which allows the construction of indexable \efgs maximizing the number of blocks and, alternatively, minimizing the maximum length of a block, in $O(n)$ and in $O(n \log\log n)$ time, respectively. We recall these results in \Cref{sect:efgconstruction}, and we refer the reader to the aforementioned papers for connections of the approach to elastic degenerate strings and Wheeler graphs.  

In this paper, we improve the preprocessing algorithm of Equi et al.\ to $O(mn)$ by performing an in-depth analysis of their solution based on the generalized suffix tree \gst built from the gaps-removed rows of the \msa (\Cref{sect:preprocessing}). Even though removing gaps constitutes a loss of essential information, this information can be fed back into the structure by considering the right subsets of its nodes or leaves. Then, the main step in preprocessing the \msa is solving a novel ancestor problem on the tree structure of \gst that we call the \emph{exclusive ancestor set problem}, and as our main contribution, we identify such problem and provide a linear-time solution.
This result directly improves the solution by Equi et al.\ for constructing indexable \efgs maximizing the number of blocks from $O(mn\log m)$ to $O(mn)$ time. Moreover, in \Cref{sect:minmaxlength} we give a new algorithm that after the $O(mn)$-time preprocessing can construct indexable \efgs minimizing the maximum block length in $O(n)$ time. Hence, we show that both type of indexable elastic founder graphs can be constructed in time linear in the input size.

\section{Definitions\label{sect:definitions}}

We follow the notation of Equi et al.~\cite{Equietal21}. 

\paragraph{Strings.}
We denote integer intervals by $[x..y]$. Let $\Sigma = [1..\sigma]$ be an alphabet of size $\lvert \Sigma \rvert = \sigma$.
A \emph{string} $T[1..n]$ is a sequence of symbols from $\Sigma$, i.e. $T\in \Sigma^n$, where 
$\Sigma^n$ denotes the set of strings of length $n$ over $\Sigma$.
A \emph{suffix} (\emph{prefix}) of string $T[1..n]$ is $T[i..n]$ ($T[1..i]$) for $1\leq i\leq n$ ($1\leq i\leq n$) and we say it is \emph{proper} if $i > 1$ ($i < n$).
The \emph{length} of a string $T$ is denoted $|T|$ and the \emph{empty string} $\varepsilon$ is the string of length $0$.
In particular, substring $T[i..j]$ where $j<i$ is the empty string.
The \emph{lexicographic order} of two strings $A$ and $B$ is naturally defined by the order of the alphabet: $A<B$ iff $A[1..i]=B[1..i]$ and $A[i+1]<B[i+1]$ for some $i\geq 0$. If $i+1>\min(|A|,|B|)$, then the shorter one is regarded as smaller. However, we usually avoid this implicit comparison by adding an \emph{end marker} $\$$ to the strings.
Concatenation of strings $A$ and $B$ is denoted $A \cdot B$, or just $AB$.

\paragraph{Elastic founder graphs.}

\msas can be compactly represented by elastic founder graphs, i.e.\ the vertex-labeled graphs that we formalize in this section.

A \emph{multiple sequence alignment} \msamn is a matrix with $m$ strings drawn from $\Sigma \cup \{\gap\}$, each of length $n$, as its rows. Here $\gap \notin \Sigma$ is the \emph{gap} symbol. For a string $X \in \left(\Sigma \cup \{\gap\}\right)^*$, we denote $\spell(X)$ the string resulting from removing the gap symbols from $X$. If an \msa does not contain gaps then we say it is \emph{gapless}, otherwise we say that it is a \emph{general} \msa.

Let $\mathcal{P}$ be a \emph{partitioning} of $[1..n]$, that is, a sequence of subintervals $\mathcal{P}=[x_1..y_1]$, $[x_2..y_2],\ldots,[x_b..y_b]$, where $x_1=1$, $y_b=n$, and for all $j>2$, $x_j=y_{j-1}+1$. A \emph{segmentation} $S$ of \msamn based on partitioning $\mathcal{P}$ is the sequence of $b$ sets $S^k= \{\spell(\msaij{i}{x_k..y_k}) \mid 1\leq i\leq m\}$ for $1\leq k\leq b$; in addition, we require for a (proper) segmentation that $\spell(\msaij{i}{x_k..y_k})$ is not an empty string for any $i$ and $k$. We call set $S^k$ a \emph{block}, while $\msaij{1..m}{x_k..y_k}$ or just $[x_k..y_k]$ is called a \emph{segment}. The \emph{length} of block $S^k$ is $L(S^k)=y_k-x_k+1$.

Segmentation naturally leads to the definition of a founder graph through the block graph concept.
\begin{definition}[Block Graph]
A \emph{block graph} is a graph $G=(V,E,\ell)$ where $\ell: V \rightarrow \Sigma^+$ is a function that assigns a string label to every node and for which the following properties hold:
\begin{enumerate}
	\item set $V$ can be partitioned into a sequence of $b$ \emph{blocks} $V^1, V^2, \ldots, V^b$, that is, $V = V^1 \cup V^2 \cup \cdots \cup V^b$ and $V^i \cap V^j = \emptyset$ for all $i\neq j$;
	\item if $(v,w) \in E$ then $v \in V^i$ and $w \in V^{i+1}$ for some $1 \leq i \leq b-1$; and
	\item if $v,w \in V^i$ then $|\ell(v)| = |\ell(w)|$ for each $1 \leq i \leq b$ and if $v\neq w$, $\ell(v) \neq \ell(w)$.
\end{enumerate}
\end{definition}
For \emph{gapless} \msas, block $S^k$ equals segment $\msaij{1..m}{x_k..y_k}$, and in that case the \emph{founder graph} is a block graph induced by segmentation $S$~\cite{MCENT20}.
The idea is to have a graph in which the nodes represent the strings in $S$ while the edges retain the information of how such strings can be recombined to spell any sequence in the original \msa.

For general \msas with gaps, we consider the following extension, with an analogy to elastic degenerate strings (\edses)~\cite{bernardini_et_al2019elastic}:
\begin{definition}[Elastic block and founder graphs]\label{def:efg}
We call a block graph \emph{elastic} if its third condition is relaxed in the sense that each $V^i$ can contain non-empty variable-length strings. An \emph{elastic founder graph} (\efg) is an elastic block graph $G(S) = (V,E,\ell)$ \emph{induced} by a segmentation $S$ as follows: for each $1 \leq k \leq b$ we have $S^k = \{\spell(\msaij{i}{x_k..y_k}) \mid 1\leq i\leq m\} = \{\ell(v) : v \in V^k\}$.
It holds that $(v,w) \in E$ if and only if there exists $k \in [1 .. b-1]$ and $i \in [1 .. m]$ such that $v \in V^k$, $w \in V^{k+1}$ and $\spell(\msaij{i}{x_k..y_{k+1}})= \ell(v)\ell(w)$.
\end{definition}
For example, in the general $\msaij{1..4}{1..13}$ of \Cref{fig:segmentation}, the segmentation based on partitioning $[1..4],[5..8],[9..13]$ induces an \efg $G(S) = (V^1 \cup V^2 \cup V^3, E, \ell)$ where the nodes in $V^1$ and $V^3$ have labels of variable length.

By definition, (elastic) founder and block graphs are acyclic. For convention, we interpret the direction of the edges as going from left to right. 
Consider a path $P$ in $G(S)$ between any two nodes. The label $\ell(P)$ of $P$ is the concatenation of labels of the nodes in the path. Let $Q$ be a query string. We say that $Q$ \emph{occurs} in $G(S)$ if $Q$ is a substring of $\ell(P)$ for any path $P$ of $G(S)$.

\begin{definition}[\cite{MCENT20}]\label{def:repeat-free}
\efg $G(S)$ is \emph{repeat-free} if each $\ell(v)$ for $v\in V$ occurs in $G(S)$ only as a prefix of paths starting with $v$. 
\end{definition}
\begin{definition}[\cite{MCENT20}]\label{def:semi-repeat-free}
\efg $G(S)$ is \emph{semi-repeat-free} if each $\ell(v)$ for $v\in V$ occurs in $G(S)$ only as a prefix of paths starting with $w\in V$, where $w$ is from the same block as $v$. 
\end{definition}
For example, the \efg of \Cref{fig:segmentation} is not repeat-free, since $\mathtt{AGC}$ occurs as a prefix of two distinct labels of nodes in the same block, but it is semi-repeat-free since all node labels $\ell(v)$ with $v \in V^k$ occur in $G(S)$ only starting from block $V^k$, or they do not occur at all elsewhere in the graph.
These definitions also apply to general elastic block graphs and to elastic degenerate strings as their special case.
We will discuss these two indexability properties together as the (semi-)repeat-free property, when applicable.

\paragraph{Basic tools.}
A \emph{trie}~\cite{Bri59} of a set of strings is a rooted directed tree with outgoing edges of each node labeled by distinct symbols such that there is a root-to-leaf path spelling each string in the set; the shared part of the root-to-leaf paths of two different leaves spell the common prefix of the corresponding strings. In a \emph{compact trie}, the maximal non-branching paths of a trie become edges labeled with the concatenation of labels on the path. The \emph{suffix tree} of $T \in \Sigma^*$ is the compact trie of all suffixes of string $T \$$. In this case, the edge labels are substrings of $T$ and can be represented in constant space as an interval. Such tree takes linear space and can be constructed in linear time~\cite{Farach97} so that when reading the leaves from left to right, the suffixes are listed in their lexicographic order. We say that two or more leaves of the suffix tree are \emph{adjacent} if they succeed one another when reading them left to right. A \emph{generalized suffix tree} is one built on a set of strings. In this case, string $T$ above is the concatenation of the strings with symbol $\$$ between each.

Let $Q[1..m]$ be a query string. If $Q$ occurs in $T$, then the \emph{locus} or \emph{implicit node} of $Q$ in the suffix tree of $T$ is $(v,k)$ such that $Q = XY$, where $X$ is the path spelled from the root to the parent of $v$ and $Y$ is the prefix of length $k$ of the edge from the parent of $v$ to $v$. The leaves of the subtree rooted at $v$, or \emph{the leaves covered by $v$}, are then all the suffixes sharing the common prefix $Q$.
Let $aX$ and $X$ be paths spelled from the root of a suffix tree to nodes $v$ and $w$, respectively. Then one can store a \emph{suffix link} from $v$ to $w$.

String $B[1..n]$ from a binary alphabet is called a \emph{bitvector}. Operation $\mathrm{rank}(B,i)$ returns the number of 1s in $B[1..i]$. Operation $\mathrm{select}(B,j)$ returns the index $i$ containing the $j$-th 1 in $B$. Both queries can be answered in constant time using an index requiring $o(n)$ bits in addition to the bitvector itself~\cite{Jac89}.

\section{Overview of \efg construction algorithms}\label{sect:efgconstruction}

Equi et al.\ have shown that (semi-)repeat-free \efgs are easy to index for fast pattern matching~\cite{Equietal21}, and as we describe in \Cref{sub:segmentation} they extended the previous research for the gapless and repeat-free setting showing that finding \mbox{(semi-)} repeat-free elastic founder graphs is equivalent to finding (semi-)repeat-free \msa segmentations.
Plus, to show that the (semi-)repeat-free property does not hinder the flexibility in choosing the resulting \efgs, they considered e.g.~the following score functions for \msa segmentations: $i$.\ maximizing the number of blocks; and $ii$.\ minimizing the maximum length of a block.

In the gapless and repeat-free setting, scores $i$.\ and $ii$.\ admit the construction of indexable founder graphs in $O(mn)$ time, thanks to previous research on founder graphs and \msa segmentations~\cite{MCENT20,NCKM19,CKMN19}.
In the general and semi-repeat-free setting, Equi et al.\ have given $O(mn \log m)$ and $O(mn \log m + n\log \log n)$-time algorithms for scores $i$.\ and $ii$., respectively, based on a common preprocessing of the \msa that we review in \Cref{sub:efgconstruction}.

\subsection{Segmentation characterization for indexable \efgs}\label{sub:segmentation}
Consider a segmentation $S = S^1, S^2, \ldots, S^b$ that induces a (semi-)repeat-free \efg $G(S)=(V,E,\ell)$, as per \Cref{def:efg}. The strings occurring in graph $G(S)$ are a superset of the strings occurring in the original \msa rows because each node label can represent \emph{multiple} rows and each edge $(v,w) \in E$ means the existence of \emph{some} row spelling $\ell(v)\ell(w)$ in the corresponding consecutive segments. For example, string $\mathtt{GACTAGT}$ occurs in the \efg of \Cref{fig:segmentation} but it does not occur in any row of the original \msa.

The (semi-)repeat-free property involves graph $G(S)$, but luckily it does not depend on the new strings added in the founder graph and it can be checked only against the \msa and segmentation $S$. This simplifies choosing a segmentation resulting in an indexable founder graph and it was initially proven by Mäkinen et al.\ in the gapless and repeat-free setting.

\begin{lemma}[Characterization, gapless setting~\cite{MCENT20}]
We say that a segment $[x,y]$ of a gapless $\msaij{1..m}{1..n}$ is repeat-free if string $\msaij{i}{x..y}$ occurs in the \msa only at position $x$ of some row, for some $1 \le i \le m$. Then $G(S)$ is repeat-free if and only if all segments of $S$ are repeat-free.
\end{lemma}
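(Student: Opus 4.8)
The plan is to prove both directions of the equivalence by translating the repeat-free property of $G(S)$ — a statement about all paths in the graph — into a statement about the \msa rows alone, using the block structure of the graph.

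First I would prove the easy direction: if some segment $[x_k..y_k]$ is \emph{not} repeat-free, then $G(S)$ is not repeat-free. By the negation of the segment definition, for every $i$ the string $\msaij{i}{x_k..y_k}$ occurs in the \msa at some position other than $x$; in particular, take any row $i$ realizing a node $v \in V^k$ and pick an occurrence of $\ell(v) = \msaij{i}{x_k..y_k}$ starting at some column $x' \neq x_k$ in some row $i'$. Since the \msa is gapless, columns $x'..y'$ (with $y' = x' + (y_k - x_k)$) are spanned by a contiguous run of segments of $\mathcal P$, so the corresponding stretch of row $i'$ is spelled by a subpath of $G(S)$; hence $\ell(v)$ occurs in $G(S)$ as a substring of a path not having $v$ as its first node — either it starts strictly inside a node label, or it starts at a node of a different block — contradicting \Cref{def:repeat-free}. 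One must also handle the degenerate subcase where $x'$ happens to be the left endpoint $x_j$ of another segment with $j \neq k$: there $\ell(v)$ is a prefix of a path starting at a node $w \in V^j$, $j \neq k$, which again violates the repeat-free (and even the semi-repeat-free) property.

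Next I would prove the converse: if all segments are repeat-free, then $G(S)$ is repeat-free. Fix a node $v \in V^k$ and suppose $\ell(v)$ occurs in $\ell(P)$ for some path $P$. The label $\ell(P)$ is a concatenation of node labels $\ell(u_1)\ell(u_2)\cdots$ with $u_t \in V^{j+t-1}$ for consecutive blocks; moreover, by \Cref{def:efg}, every such concatenation $\ell(u_1)\cdots\ell(u_r)$ equals $\spell$ of some row restricted to the span $[x_j..y_{j+r-1}]$, so in the gapless case it equals $\msaij{i}{x_j..y_{j+r-1}}$ for some actual row $i$. Thus any occurrence of $\ell(v) = \msaij{i'}{x_k..y_k}$ inside $\ell(P)$ yields an occurrence of $\ell(v)$ as a substring of an \msa row. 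Repeat-freeness of segment $[x_k..y_k]$ forces this occurrence to begin exactly at column $x_k$ and hence, by the alignment of segments within the concatenated row, to begin exactly at the start of one of the node labels $\ell(u_t)$ with $u_t \in V^k$; since distinct nodes in $V^k$ have distinct labels of the same length (block graph condition 3, still valid for gapless \msas), $u_t = v$, so $\ell(v)$ occurs only as a prefix of a path starting with $v$, as required.

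The main obstacle I expect is the careful bookkeeping in the converse direction linking column positions in the \msa to node boundaries along a path: I need that an occurrence of $\ell(v)$ which starts at \msa-column $x_k$ within the row-string $\msaij{i}{x_j..y_{j+r-1}}$ cannot straddle a block boundary, which is exactly where gaplessness is used — the segment lengths $y_t - x_t + 1 = L(S^t)$ coincide with the node-label lengths $|\ell(u_t)|$, so column offsets translate directly into label offsets with no shift. A secondary technical point is making precise the claim that any substring of any path label arises as $\spell$ of a contiguous stretch of some actual \msa row; this follows by a short induction on the path length using the edge condition of \Cref{def:efg}, and it is the one place where one must be slightly careful that the "some row $i$" can be chosen consistently, but in the gapless setting $\spell$ is the identity on the relevant columns, so the induction is routine.
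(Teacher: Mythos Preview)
The paper does not prove this gapless lemma directly (it is cited from \cite{MCENT20}); the paper instead proves the general \Cref{lem:characterization}, whose proof specializes to the gapless case. Comparing your proposal to that proof, your forward direction (some segment invalid $\Rightarrow$ $G(S)$ invalid) matches the paper's argument, modulo a small slip: the negation of ``segment $[x,y]$ is repeat-free'' is that \emph{there exists} $i$ with $\msaij{i}{x..y}$ occurring elsewhere, not ``for every $i$'' as you wrote---but since you only use one such $i$, the argument survives.

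The converse direction, however, contains a genuine gap. You assert that for any path $P=u_1\cdots u_r$ the concatenation $\ell(u_1)\cdots\ell(u_r)$ equals $\msaij{i}{x_j..y_{j+r-1}}$ for \emph{some single} row $i$, and you say this ``follows by a short induction on the path length using the edge condition''. It does not: the edge condition of \Cref{def:efg} only guarantees that each individual edge $(u_t,u_{t+1})$ is witnessed by some row, and different edges along the same path can be witnessed by different rows. Gaplessness does not help here; the obstruction is purely combinatorial. Concretely, with rows $\mathtt{ABCDEF}$, $\mathtt{ABCDXY}$, $\mathtt{ZWCDEF}$ and segmentation $[1..2],[3..4],[5..6]$, the path spelling $\mathtt{ZWCDXY}$ exists in $G(S)$ but lies in no row. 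So an occurrence of $\ell(v)$ inside $\ell(P)$ need not yield an occurrence of $\ell(v)$ in any \msa row when the occurrence spans three or more nodes of $P$.

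The paper (following \cite{MCENT20}) sidesteps exactly this issue: when the offending occurrence of $\ell(v)$ in $P$ involves at most two nodes, $\ell(P)$ really is a row substring and your argument goes through; when it involves $k\ge 3$ nodes, one does \emph{not} try to place $\ell(v)$ in a row at all. Instead one observes that $\ell(v)=A\,\ell(w)\,B$ for some fully contained middle node $w$, and since $\ell(v)$ itself is a single node label, it equals $\msaij{i'}{x_k..y_k}$ for some row $i'$. Hence $\ell(w)$ occurs in row $i'$ at a position strictly inside $[x_k..y_k]$, which violates repeat-freeness of the segment of $w$. Replacing your induction claim with this case split on the number of nodes spanned would fix the proof.
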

Equi et al.\ in \cite{Equietal21} refined this property for \msas with gaps but did not provide an explicit proof. Since it is essential to the correctness of the construction algorithms, we provide such a proof here.
\begin{restatable}[Characterization~\cite{Equietal21}]{lemma}{characterization}\label{lem:characterization}
We say that segment $[x,y]$ of a general $\msaij{1..m}{1..n}$ is semi-repeat-free if for any $i,i' \in [1..m]$ string $\spell(\msaij{i}{x..y})$ occurs in gaps-removed row $\spell(\msaij{i'}{1..n})$ only at position $g(i',x)$, where $g(i',x)$ is equal to $x$ subtracted the number of gaps in $\msaij{i'}{1..x}$. Similarly, $[x,y]$ is repeat-free if the eventual occurrence of $\spell(\msaij{i}{1..n})$ at position $g(i',x)$ in row $i'$ also ends at position $g(i',y)$. Then $G(S)$ is (semi-)repeat-free if and only if all segments of $S$ are (semi-)repeat-free.
\end{restatable}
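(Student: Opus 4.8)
The plan is to treat the semi-repeat-free and repeat-free cases together, the latter being the former plus one extra ``ending'' clause, and to prove each implication separately. The bridge between $G(S)$ and the \msa is that every gaps-removed row is itself a path label of $G(S)$: if $w_j$ is the node of $V^j$ with $\ell(w_j)=\spell(\msaij{i'}{x_j..y_j})$ --- which exists because the segmentation is proper --- then $w_1,w_2,\ldots,w_b$ is a path, since row $i'$ witnesses every edge $(w_j,w_{j+1})$, and $\ell(w_1\cdots w_b)=\spell(\msaij{i'}{1..n})$. Inside this string the factor contributed by block $k$ begins at position $g(i',x_k)$ and ends at $g(i',y_k)$, and the starting positions $g(i',x_1)<g(i',x_2)<\cdots<g(i',x_b)$ strictly increase precisely because every block is nonempty in every row.

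For ``$G(S)$ is (semi-)repeat-free $\Rightarrow$ every segment is (semi-)repeat-free'' I argue by contraposition, using that repeat-freeness of $G(S)$ trivially implies its semi-repeat-freeness. If segment $[x_k..y_k]$ fails semi-repeat-freeness, fix rows $i,i'$ and an occurrence of $\ell(v):=\spell(\msaij{i}{x_k..y_k})$ in $\spell(\msaij{i'}{1..n})$ starting at a position $\ne g(i',x_k)$, where $v\in V^k$ has label $\ell(v)$; reading $\spell(\msaij{i'}{1..n})$ as the label of the path $w_1\cdots w_b$, this occurrence sits inside a path label of $G(S)$ yet does not begin at the start of the only node of block $k$ on that path, namely $w_k$ (whose label starts at $g(i',x_k)$) --- contradicting semi-repeat-freeness of $G(S)$. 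If instead $[x_k..y_k]$ is semi-repeat-free but fails the extra clause, then for some $i,i'$ the occurrence of $\ell(v)$ at $g(i',x_k)$ in $\spell(\msaij{i'}{1..n})$ does not end at $g(i',y_k)$; but repeat-freeness of $G(S)$ forces this occurrence, which starts where node $w_k$ starts, to be aligned with $w_k$ and hence equal to $\ell(w_k)$, i.e.\ to fill block $k$'s factor exactly, so it does end at $g(i',y_k)$ --- again a contradiction.

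The substantial direction is ``every segment (semi-)repeat-free $\Rightarrow$ $G(S)$ (semi-)repeat-free''. Take an occurrence of $\ell(v)$, $v\in V^k$, inside $\ell(P)$ for a path $P=u_j,\ldots,u_{j'}$, spanning blocks $t,t+1,\ldots,t'$ of $P$ and beginning at internal position $p$ of $\ell(u_t)$ (so $1\le p\le|\ell(u_t)|$). The obstacle is that $\ell(P)$ need not be a substring of any single \msa row (for instance $\mathtt{GACTAGT}$ in \Cref{fig:segmentation}), so the per-segment hypothesis does not apply to $\ell(P)$ directly; the key claim is $t'\le t+1$. Indeed, if $t'\ge t+2$ then $\ell(u_{t+1})$ lies wholly inside the occurrence, hence is a factor of $\ell(v)=\spell(\msaij{i}{x_k..y_k})$ and so of $\spell(\msaij{i}{1..n})$; by semi-repeat-freeness of segment $[x_{t+1}..y_{t+1}]$ its only occurrence in $\spell(\msaij{i}{1..n})$ is at the left endpoint of block $t+1$'s factor, whereas --- since $\ell(v)$ is precisely block $k$'s factor of $\spell(\msaij{i}{1..n})$ --- a copy of $\ell(u_{t+1})$ sits strictly past the left endpoint of block $k$'s factor (a nonempty piece of $\ell(u_t)$ precedes it) yet still inside that factor, forcing block $t+1$'s factor to begin strictly inside block $k$'s factor --- impossible, as the block factors are disjoint and consecutive. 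With $t'\in\{t,t+1\}$ in hand, the whole occurrence lies inside $\spell(\msaij{i_1}{x_t..y_{t'}})$ for a single row $i_1$ (any row realizing $\ell(u_t)$ if $t'=t$; the row witnessing the edge $(u_t,u_{t+1})$ if $t'=t+1$), so $\ell(v)$ occurs in $\spell(\msaij{i_1}{1..n})$; semi-repeat-freeness of segment $[x_k..y_k]$ pins that occurrence to position $g(i_1,x_k)$, and equating this with its actual location inside the factor of $\spell(\msaij{i_1}{1..n})$ spanned by blocks $t$ through $t'$ --- namely $g(i_1,x_t)+p-1$ --- while using the strict increase of the block-factor endpoints gives $t=k$ and $p=1$. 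Hence the occurrence starts at the beginning of $\ell(u_k)$ with $u_k\in V^k$, which is what semi-repeat-freeness of $G(S)$ requires; for the repeat-free variant, the extra clause makes the occurrence of $\ell(v)$ fill block $k$'s factor of row $i_1$ exactly, forcing $\ell(v)=\ell(u_k)$ and so $v=u_k$, which is repeat-freeness of $G(S)$.

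The step I expect to be the real obstacle is the two-consecutive-blocks claim $t'\le t+1$: it is exactly what allows ``a path of $G(S)$'' to be replaced by ``a single \msa row'', so that the per-segment hypothesis can be brought to bear. The rest --- in particular the bookkeeping of the positions $g(i',\cdot)$ and the off-by-one effects of gaps at block boundaries --- is routine.
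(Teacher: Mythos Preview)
Your proof is correct and follows the same core strategy as the paper's: both hinge on the observation that an occurrence of $\ell(v)$ in a path of $G(S)$ can span at most two consecutive blocks, because otherwise a fully contained middle node label would violate the semi-repeat-freeness of that node's segment. The paper phrases the hard direction by double contraposition (if $G(S)$ is invalid and the witnessing path has $k>2$ vertices, extract a middle vertex $w$ whose segment is invalid; if $k\le2$ it defers to~\cite{MCENT20}), whereas you argue it directly (assume all segments valid, force $t'\le t+1$, then pin the occurrence via a single witnessing row), and you are more explicit about the position arithmetic with $g(i,\cdot)$ in the $t'\le t+1$ case—but the substance is the same.
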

\begin{proof}
For convenience, we say that a segment or a founder graph is \emph{valid} if it is (semi-)repeat-free, otherwise it is \emph{invalid}. Moreover, we define the following notion of a standard string occurring in $G(S) = (V,E,\ell)$.
We say that $S \in \Sigma^+$ is a \emph{standard substring of path $P = w_1, \dots, w_k$ in $G(S)$} if $P$ spells $S$ using all of its vertices, meaning
\[
    S = \ell(w_1) \big[ j..\lvert \ell(w_1) \rvert \big] \cdot \ell(w_2) \cdots \ell(w_{k-1}) \cdot \ell(w_k) \big[ 1..j' \big]
\]
with $1 \le j \le \lvert \ell(w_1) \rvert$ and $1 \le j' \le \lvert \ell(w_k) \rvert$.
We also say that the occurrence of $S$ through $P$ involves $k$ vertices of $G(S)$.

We carry out the proof of the two sides by proving their contrapositions and using the following facts:
\begin{enumerate}
    \item a segment $[x,y]$ is invalid if and only if there exist $i,i' \in [1..m]$ such that string $\spell(\msaij{i}{x..y})$ occurs in row $\spell(\msaij{i'}{1..n})$ at some position other than $g(i',x)$, or string $\spell(\msaij{i}{x..y})$ is a proper prefix of string $\spell(\msaij{i'}{x..y})$ (for the semi-repeat-free case ignore this last condition);
    \item founder graph $G(S)$ is invalid if and only if there exists node $v \in V$ such that $\ell(v)$ is a standard substring of some path $P = w_1, \dots, w_k$ in $G(S)$ and one of the following holds: $w_1$ is in a different block than $v$, $\ell(v)$ occurs in $\ell(P)$ at some position other than 1, or $k = 1$ and $\ell(v)$ is a proper prefix of $\ell(P) = \ell(w_1)$ (for the semi-repeat-free case, ignore this last condition).
\end{enumerate}
It is immediate to see that by construction of $G(S)$ the additional invalidity conditions exclusive to the repeat-free case (the last conditions of facts 1.\ and 2.) are equivalent, so we concentrate on the conditions in common with the semi-repeat-free case.

($\Rightarrow$) Let $[x,y]$ be an invalid segment of $S$, with string $\spell(\msaij{i}{x..y})$ occurring in row $i'$ at some position $j$ other than $g(i',x)$, for some $i,i' \in [ 1 .. m ]$, and let $v \in V$ be the node in the block corresponding to segment $[x,y]$ such that $\ell(v) = \spell(\msaij{i}{x..y})$. If $g(i',x) < j \le g(i',y)$ then $\ell(v)$ occurs in $\ell(P)$ at position $p - g(i',x) \neq 1$, with $P$ a path starting from the same block of $v$, otherwise $j < g(i',x)$ or $j > g(i',y)$ and $\ell(v)$ occurs in some path of $G(S)$ starting from a node in a different block than that of $v$. In both cases $G(S)$ is invalid.

($\Leftarrow$) If $G(S)$ is invalid, let $\ell(v)$ be a standard substring of some path $P = w_1, \dots, w_k$ of $G(S)$ making the founder graph invalid, for some $v \in V$. Following the same arguments as in~\cite[Section 5.1]{MCENT20}, if $k \le 2$ then $\ell(P)$ is a substring of some row of the input \msa that makes $S$ invalid, since by construction of $G(S)$ for every edge $(u,u') \in E$ it holds that $\ell(w)\ell(w')$ occurs in the \msa.
Otherwise $k > 2$, i.e.\ the occurrence of $\ell(v)$ through $P$ involves at least three vertices, and $\ell(v) = A \ell(w) B$ for some $w \in \lbrace w_2, \dots, w_{k-1} \rbrace \subseteq V$,  $A,B \in \Sigma^+$. But then $\ell(w) \in \Sigma^+$ occurs in $\ell(v)$ at some position other than 1 and so there are row indices $i,i' \in [1..m]$ such that $\ell(w) = \spell(\msaij{i}{x..y})$ occurs in $\spell(\msaij{i'}{1..n})$ at some position other than $g(i',x)$, where $[x,y]$ is the segment of $S$ corresponding to the block of $w$ and $\spell(\msaij{i'}{1..n})$ contains $\ell(v)$, making segment $[x,y]$ invalid.
\end{proof}

\subsection{\efg construction algorithms}\label{sub:efgconstruction}
Just as in the gapless and repeat-free setting, \Cref{lem:characterization} implies that the optimal score $s(j)$ of a (semi-)repeat-free segmentation of the general \msa prefix $\msaij{1..m}{1..j}$ can be computed recursively for a variety of scoring schemes:
\begin{equation}\label{eq:score}
    s(j) =
    \bigoplus_{\substack{j' \,:\, 0 \le j' < j \;\text{s.t.} \\ \msaij{1..m}{j'+1..j} \,\text{is} \\ \text{(semi-)repeat-free}}}
    g \big(
    s(j'), j', j
    \big)
\end{equation}
where operator $\bigoplus$ and function $g$ depend on the desired scoring scheme. Indeed:
\begin{itemize}
    \item[$i$.]
for $s(j)$ to be equal to the optimal score of a segmentation maximizing the number of blocks, set $\bigoplus = \max$ and $g(s(j'),j',j) = s(j') + 1$; for a correct initialization set $s(0) = 0$ and where there is no (semi)-repeat-free segmentation set $s(j) = -\infty$;
    \item[$ii$.]
for minimizing the maximum block length, set $\bigoplus = \min$ and $g(s(j'),j',j) = \max ( s(j'), L([j'+1,j]) ) = \max ( s(j'), j - j' )$; set $s(0) = 0$ and if there is no (semi)-repeat-free  segmentation set $s(j) = +\infty$.
\end{itemize}

Equi et al.\ studied the computation of semi-repeat-free segmentations optimizing for these two scores~\cite{Equietal21}. The algorithms they developed---and that we will improve in \Cref{sect:preprocessing,sect:minmaxlength}---are based on a common preprocessing of the valid semi-repeat-free segmentation ranges, based on the following observation.
\begin{myobservation}[Semi-repeat-free right extensions~\cite{Equietal21}]\label{obs:rightextensions}
Given a general $\msaij{1..m}{1..n}$, for any $x < y$ we say that segment $[x+1..y]$ is an extension of prefix $\msaij{1..m}{1..x}$. If segment $[x+1..y]$ is semi-repeat-free, then segment $[x+1..y']$ is semi-repeat-free for all $y < y' \le n$.
\end{myobservation}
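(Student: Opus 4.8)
The plan is to avoid reasoning about the founder graph altogether and instead work purely with string occurrences, via the characterization of \Cref{lem:characterization}. That lemma says that segment $[x+1..y]$ is semi-repeat-free exactly when, for every pair $i,i' \in [1..m]$, the string $\spell(\msaij{i}{x+1..y})$ occurs in the gaps-removed row $\spell(\msaij{i'}{1..n})$ only at position $g(i',x+1)$; applying it again with $y'$ in place of $y$, it suffices to show that for all $i,i'$ the string $\spell(\msaij{i}{x+1..y'})$ occurs in $\spell(\msaij{i'}{1..n})$ only at position $g(i',x+1)$. So the whole task reduces to transferring occurrence-uniqueness from the shorter segment string to the longer one, with the same anchor position.

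The key step is the elementary observation that $\msaij{i}{x+1..y}$ is a prefix of $\msaij{i}{x+1..y'}$ (as strings over $\Sigma \cup \{\gap\}$) and that removing gaps commutes with taking this prefix, so $\spell(\msaij{i}{x+1..y'}) = \spell(\msaij{i}{x+1..y}) \cdot \spell(\msaij{i}{y+1..y'})$; hence $\spell(\msaij{i}{x+1..y})$ is a prefix of $\spell(\msaij{i}{x+1..y'})$. Now fix $i,i'$ and suppose $\spell(\msaij{i}{x+1..y'})$ occurs in $\spell(\msaij{i'}{1..n})$ at some position $p$. Then $\spell(\msaij{i}{x+1..y})$, being a prefix of it, also occurs there at position $p$; since $[x+1..y]$ is semi-repeat-free, \Cref{lem:characterization} forces $p = g(i',x+1)$. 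Thus every occurrence of $\spell(\msaij{i}{x+1..y'})$ in $\spell(\msaij{i'}{1..n})$ is at position $g(i',x+1)$, and as $i,i'$ were arbitrary, \Cref{lem:characterization} gives that $[x+1..y']$ is semi-repeat-free.

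I do not expect a genuine obstacle here: the content is just the reduction to \Cref{lem:characterization} together with the monotonicity of occurrence-position sets under passing to a prefix. The single point worth a sentence of care is that the anchor position $g(i',x+1)$ appearing in the condition depends only on the left endpoint $x+1$ and not on the right endpoint, so it is literally the same value for $[x+1..y]$ and for $[x+1..y']$; this is precisely why extending a segment to the right can neither create a spurious occurrence nor displace the legitimate one.
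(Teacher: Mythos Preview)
Your argument is correct. The only phrasing quibble is that you invoke \Cref{lem:characterization} as if you were using its conclusion (the equivalence between the segment property and the graph property), but in fact you are using only the \emph{definition} of a semi-repeat-free segment stated there; the equivalence with $G(S)$ never enters. The actual content of your proof is exactly the prefix-monotonicity argument you describe: $\spell(\msaij{i}{x+1..y})$ is a prefix of $\spell(\msaij{i}{x+1..y'})$, so the occurrence set of the latter in any $\spell(\msaij{i'}{1..n})$ is contained in that of the former, and the anchor $g(i',x+1)$ is independent of the right endpoint.

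As for comparison with the paper: there is nothing to compare against. The paper states this observation without proof, attributing it to~\cite{Equietal21}; the per-row analogue is likewise only asserted in \Cref{obs:rowsegmentf}. Your write-up is therefore a welcome explicit justification of a step the paper leaves implicit.
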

Note that in the presence of gaps \Cref{obs:rightextensions} does not hold if we swap the semi-repeat-free notion with the repeat-free one, or if we swap the right extensions with the symmetrically defined left extensions.

In order to compute $s(j)$, \Cref{eq:score} considers all semi-repeat-free right extensions $[j'+1..j]$ ending at column $j$. Equi et al.\ discovered that the computation of values $s(j)$ can be done efficiently by considering that all semi-repeat-free right extensions $[j'+1..j]$ are prefixes of minimal (semi-repeat-free) right extensions $[j'+1..f(j')]$, with function $f$ defined as follows.
\begin{definition}[Minimal right extensions~\cite{Equietal21}]\label{def:rightextensions}
Given $\msaij{1..m}{1..n}$, for each $0 \le x \le n-1$ we define value $f(x)$ as the smallest integer greater than $x$ such that segment $[x+1..f(x)]$ is semi-repeat-free, or, in other words, $[x+1..f(x)]$ is the minimal (semi-repeat-free) right extension of prefix $\msaij{1..m}{1..x}$. If there is no semi-repeat-free extension, we define $f(x) = \infty$.
\end{definition}
Indeed, Equi et al.\ in \cite{Equietal21} developed an algorithm computing values $f(x)$ in time $O(mn\log m)$. Using only these values, described by a list of pairs $(x,f(x))$ sorted in increasing order by the second component, they developed two algorithms computing the score of an optimal semi-repeat-free segmentation: in time $O(n)$ for the maximum number of blocks score and in time $O(n \log\log n)$ for the maximum block length score. We will explain in detail how the latter works in \Cref{sect:minmaxlength}, as we will improve its run time to $O(n)$.

\section{Preprocessing the \msa in linear time}\label{sect:preprocessing}
In this section, we study the computation of the minimal right extensions $f(x)$, for $0 \le x \le n - 1$ (\Cref{def:rightextensions}).
Equi et al.\ in \cite{Equietal21} gave an $O(nm \log m)$-time algorithm for this \msa preprocessing using the following structure, built from the gaps-removed rows of the \msa.

\begin{definition}
Given $\msaij{1..m}{1..n}$ from alphabet $\Sigma \cup \lbrace \gap \rbrace$, we define \gst as the generalized suffix tree of the set of strings
$\lbrace \spell( \msaij{i}{1..n} ) \cdot \$_i : 1 \le i \le m \rbrace$, with $\$_1, \dots, \$_m$ $m$ new distinct terminator symbols not in $\Sigma$.%
\footnote{We added the $m$ new distinct terminators for simplicity, whereas Equi et al.\ used the suffix tree of the concatenation of all gaps-removed rows with a single new symbol $\$$ between each. The suffix tree of this string, if concatenated with a final suffix $\$\$$, is equivalent to \gst for our purposes.}
\end{definition}
An example of \gst is given in \Cref{fig:gstmsa}. From the suffix tree properties, it follows that for any gaps-removed row $\alpha_i \coloneqq \spell(\msaij{i}{1..n}) \$_i$:
each suffix $\alpha_i[x,\lvert \alpha_i \rvert]$ corresponds to a unique leaf $\ell_{i,x}$ of \gst and vice versa, with $1 \le x \le \lvert \alpha_i \rvert$;
each substring $\alpha_i[x,y]$ corresponds to an explicit or implicit node of \gst in the root-to-$\ell_{i,x}$ path; and each explicit or implicit node corresponds to one or more such substrings, uniquely identifiable thanks to the leaves covered by the node.
Also, note that \gst does not contain any information about the gap symbols of the \msa, as this information will be added back into the structure thanks to the set of leaves and nodes considered.

\begin{figure}[htp]
\centering
\includegraphics{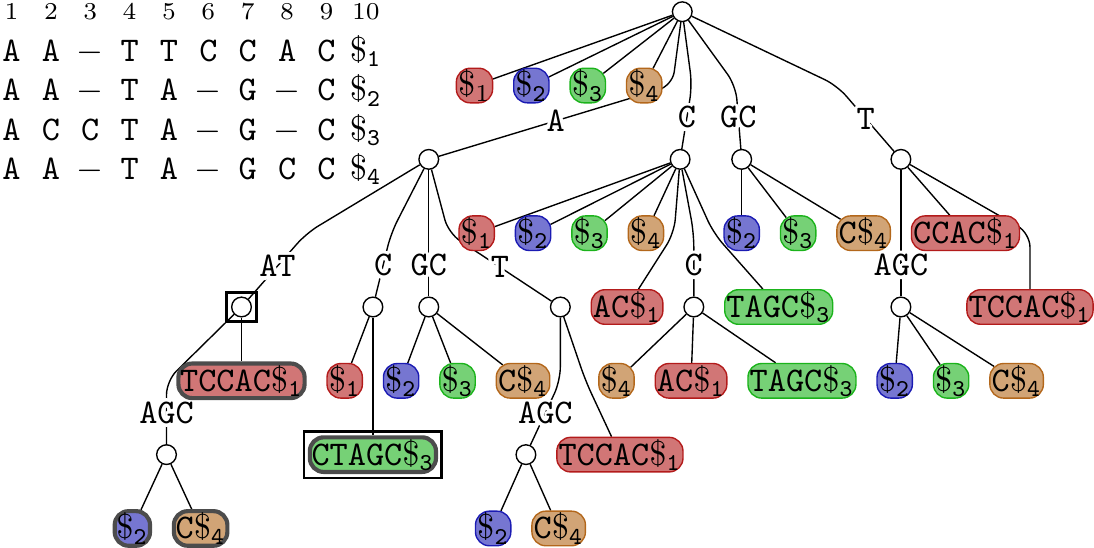}
\caption{Example of an $\msaij{1..4}{1..10}$ and its \gst, where each leaf is represented by the label of the corresponding leaf, which is colored according to the corresponding row. We have also highlighted  (with a grey outline) the leaves corresponding to suffixes $\spell(\msaij{i}{1..n})$ and its exclusive ancestors (with rectangles).
}\label{fig:gstmsa}
\end{figure}

In \Cref{sub:gst} we perform an analysis of \gst similar to that of Equi et al., showing that semi-repeat-free segments of the \msa correspond to a specific set of nodes of \gst covering exactly $m$ leaves. Then, in \Cref{sub:exclusiveancestorset}, we show that the novel resulting problem on the tree structure of \gst, that we call the \emph{exclusive ancestor set problem}, can be solved efficiently, resulting in an algorithm computing the minimal right extensions in linear time, described in \Cref{sub:minimalrightextensions}.

\subsection{Semi-repeat-free segments in the generalized suffix tree}\label{sub:gst}
\begin{definition}[Semi-repeat-free substrings]
Recall the definition of semi-repeat-free segment (\Cref{lem:characterization}). Given a substring $\msaij{i}{x..y}$ of $\msaij{1..m}{1..n}$ such that $\spell(\msaij{i}{x..y}) \in \Sigma^+$, we say that $\msaij{i}{x..y}$ is semi-repeat-free if, for all $1 \le i' \le m$, string $\spell(\msaij{i}{x..y})$ occurs in gaps-removed row $i'$ only at position $g(i',x)$ (or it does not occur at all).
\end{definition}
\begin{myobservation}\label{obs:rowsegmentf}
The following simple properties have been implicitly stated and exploited in~\cite{Equietal21}:
\begin{itemize}
    \item
segment $[x..y]$ is semi-repeat-free if and only if all substrings $\msaij{i}{x..y}$ are semi-repeat-free, for $1 \le i \le m$;
    \item
if $\msaij{i}{x..y}$ is semi-repeat-free, then $\msaij{i}{x..y'}$ is semi-repeat-free for all $y < y' \le n$;
    \item
let $f^i(x)$ be the smallest integer greater than $x$ such that substring $\msaij{i}{x+1..f^i(x)}$ is semi-repeat-free; it is easy to see that $f(x) = \max_{i=1}^{m} f^i(x)$.
\end{itemize}
\end{myobservation}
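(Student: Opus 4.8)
The plan is to check the three items in turn directly from the definitions; each is a short unfolding and I expect no serious obstacle. For the first equivalence I would compare the definition of a semi-repeat-free \emph{segment} $[x..y]$ in \Cref{lem:characterization} with the definition of a semi-repeat-free \emph{substring} $\msaij{i}{x..y}$: the former requires that for \emph{all} $i,i' \in [1..m]$ the string $\spell(\msaij{i}{x..y})$ occur in $\spell(\msaij{i'}{1..n})$ only at position $g(i',x)$, whereas the latter fixes $i$ and quantifies only over $i'$. Hence ``segment $[x..y]$ is semi-repeat-free'' is literally the conjunction over $i \in [1..m]$ of ``$\msaij{i}{x..y}$ is semi-repeat-free'', assuming as always for a proper segmentation that $\spell(\msaij{i}{x..y}) \in \Sigma^+$ for every $i$ (which is also what makes the substring notion well defined). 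This gives the ``if and only if'' with no further argument.

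For the monotonicity claim, I would suppose $\msaij{i}{x..y}$ is semi-repeat-free and fix $y < y' \le n$. Since $\spell(\msaij{i}{x..y'})$ has the nonempty string $\spell(\msaij{i}{x..y})$ as a prefix, it also lies in $\Sigma^+$. Now fix $i'$ and assume $\spell(\msaij{i}{x..y'})$ occurs in $\spell(\msaij{i'}{1..n})$ at some position $p$; then its prefix $\spell(\msaij{i}{x..y})$ occurs at the same position $p$, and semi-repeat-freeness of $\msaij{i}{x..y}$ forces $p = g(i',x)$. The key point is that extending the segment to the right leaves the starting column $x$, and therefore $g(i',x)$, unchanged---this is precisely why the argument works and, symmetrically, why it would break for left extensions. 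Since $i'$ was arbitrary, every occurrence of $\spell(\msaij{i}{x..y'})$ in every row is at $g(i',x)$, so $\msaij{i}{x..y'}$ is semi-repeat-free.

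The last identity follows by combining the first two items. If $f^i(x) = \infty$ for some $i$, then no extension $\msaij{i}{x+1..y}$ is semi-repeat-free, so by the first item no segment $[x+1..y]$ is semi-repeat-free and $f(x) = \infty = \max_{i} f^i(x)$. Otherwise, monotonicity together with the minimality of $f^i(x)$ shows that $\msaij{i}{x+1..y}$ is semi-repeat-free exactly when $y \ge f^i(x)$ (for $x < y < f^i(x)$ the string $\spell(\msaij{i}{x+1..y})$ is either empty, hence not in $\Sigma^+$, or not semi-repeat-free by minimality). By the first item, $[x+1..y]$ is semi-repeat-free iff $y \ge f^i(x)$ for every $i$, i.e.\ iff $y \ge \max_{i=1}^{m} f^i(x)$; the smallest such $y$ is $\max_{i=1}^{m} f^i(x)$, which equals $f(x)$ by \Cref{def:rightextensions}. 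The only mildly delicate points---that an occurrence of a string carries with it an occurrence of each of its prefixes at the same position, and the bookkeeping of the $f^i(x) = \infty$ case---are handled above, so ``easy to see'' is accurate.
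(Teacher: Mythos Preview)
Your proposal is correct. The paper does not actually prove this observation---it is stated as an observation without proof, the claim being that these are ``simple properties'' already implicit in the cited prior work---and your argument is exactly the direct unfolding of the definitions one would expect: the first item is the conjunction-over-$i$ reading of \Cref{lem:characterization}, the second is the prefix-of-an-occurrence argument, and the third combines them.
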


From the semi-repeat-free definition it follows that the strings of a semi-repeat-free segment occur at most once per row and only starting from a certain column range. This translates into a specific set of implicit or explicit nodes of \gst, with features that we can exploit to compute the minimal right extensions. The fact that we added a unique terminator symbol to each row is equivalent to the addition of an \msa column spelling $\$_1 \cdots \$_m$ at index $n+1$, meaning that a semi-repeat-free segment starting from column $x+1$ always exists and the minimal right extensions such that $f(x) = \infty$ becomes $f(x) = n + 1$.
\begin{lemma}\label{lem:gst}
Given $m$ row substrings $\msaij{i}{x..y_i}$ of $\msaij{1..m}{1..n}$ such that $\spell(\msaij{i}{x..y_i}) \in \Sigma^+$ for all $1 \le i \le m$,
let $W = \lbrace w_1, \dots, w_k \rbrace$ be the set of implicit or explicit nodes of \gst corresponding to strings
$\lbrace \spell( \msaij{i}{x..y_i} : 1 \le i \le m \rbrace$.
Then
    $\msaij{i}{x..y_i}$ is semi-repeat-free for all $1 \le i \le m$
    if and only if
    $W$ covers exactly $m$ leaves in \gst.
\end{lemma}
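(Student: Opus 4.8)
The plan is to prove the two directions separately, in each case translating the combinatorial condition on leaves of $\mgst$ into the string-occurrence condition that defines semi-repeat-freeness. The key bridge is the suffix-tree fact recalled after the definition of $\mgst$: an explicit or implicit node $w$ corresponding to a string $\beta$ covers exactly the leaves $\ell_{i',x'}$ such that $\beta$ occurs in row $i'$ at position $x'$ (equivalently, $\beta$ is a prefix of the suffix $\alpha_{i'}[x'..|\alpha_{i'}|]$). So for each $i$, the node $w_j$ corresponding to $\spell(\msaij{i}{x..y_i})$ covers leaf $\ell_{i',x'}$ iff $\spell(\msaij{i}{x..y_i})$ occurs at position $x'$ in row $i'$; in particular $w_j$ always covers the leaf $\ell_{i,g(i,x)}$, since $\spell(\msaij{i}{x..y_i})$ does occur at that position in its own row. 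I will use this identification throughout.

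First I would prove the ($\Leftarrow$) direction by contraposition: suppose some $\msaij{i}{x..y_i}$ is \emph{not} semi-repeat-free, so $\spell(\msaij{i}{x..y_i})$ occurs in some row $i'$ at a position $x' \neq g(i',x)$. Then the node $w_j \in W$ corresponding to row $i$ covers both $\ell_{i',x'}$ and $\ell_{i',g(i',x)}$: the latter because $\spell(\msaij{i}{x..y_i})$, being semi-repeat-free or not, is a string that — wait, more carefully — I must argue $\ell_{i',g(i',x)}$ is covered. It is covered precisely when $\spell(\msaij{i}{x..y_i})$ is a prefix of $\spell(\msaij{i'}{x..n})$; this need not hold in general. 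Instead the clean argument is: the leaf $\ell_{i',x'}$ is covered by $w_j$, and $\ell_{i',x'}$ with $x' \neq g(i',x)$ is a ``stray'' leaf. Consider the $m$ canonical leaves $\ell_{i,g(i,x)}$ for $1 \le i \le m$; each is covered by the node $w_{j(i)} \in W$ of its own row. If all $m$ substrings were semi-repeat-free, $W$ would cover \emph{only} these $m$ leaves (this is the content I develop in the other direction), so $x'$ being different from all $g(i'',x)$ forces $W$ to cover at least $m+1$ leaves, contradicting the hypothesis. Hence this direction reduces to establishing the forward direction's leaf-count bound, which I do next.

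For the ($\Rightarrow$) direction, assume every $\msaij{i}{x..y_i}$ is semi-repeat-free. I claim each $w_j \in W$ covers only leaves of the form $\ell_{i',g(i',x)}$: if $w_j$ corresponds to row $i$ and covers $\ell_{i',x'}$, then $\spell(\msaij{i}{x..y_i})$ occurs at position $x'$ in row $i'$, and semi-repeat-freeness forces $x' = g(i',x)$. Thus $\bigcup_j (\text{leaves covered by } w_j) \subseteq \{\ell_{i',g(i',x)} : 1 \le i' \le m\}$, a set of exactly $m$ leaves (the $g(i',x)$ are distinct across $i'$ since they index distinct rows, and $\spell(\msaij{i'}{x..n})$ is nonempty so the leaf exists). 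It remains to show every such canonical leaf \emph{is} covered, i.e., the containment is an equality: the leaf $\ell_{i',g(i',x)}$ corresponds to the suffix starting with $\spell(\msaij{i'}{x..n})$, which has $\spell(\msaij{i'}{x..y_{i'}})$ as a prefix, so it is covered by the node $w_{j}$ corresponding to row $i'$. Therefore $W$ covers exactly the $m$ canonical leaves.

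The main obstacle I anticipate is bookkeeping around the case where several rows yield the \emph{same} node of $\mgst$ (so $k < m$) versus distinct nodes, and making sure the ``exactly $m$'' count is not corrupted by a node of $W$ being a proper ancestor/descendant of another — I should note that if $w_a$ is an ancestor of $w_b$ then $\spell(\msaij{a}{x..y_a})$ is a prefix of $\spell(\msaij{b}{x..y_b})$, and semi-repeat-freeness of the former applied to its occurrence inside row $b$ at position $g(b,x)$ is consistent, but I must verify no extra leaves sneak in; the subset argument above handles this uniformly since it bounds the union of covered leaves regardless of the ancestor structure. The other delicate point is the nonemptiness/distinctness of the $g(i',x)$ values, which follows directly from the standing assumption $\spell(\msaij{i}{x..y_i}) \in \Sigma^+$ and the definition of $g$.
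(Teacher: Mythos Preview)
Your proposal is correct and follows essentially the same route as the paper: both arguments first observe that the $m$ ``canonical'' leaves $\ell_{i,g(i,x)}$ are always covered by $W$ (each by the node of its own row), and then show by contraposition that an invalid row substring produces a stray leaf $\ell_{i',x'}$ with $x'\neq g(i',x)$, hence distinct from all canonical leaves, while validity forces every covered leaf to be canonical. Your extra discussion of the ancestor/descendant structure among the $w_j$ is unnecessary (the subset bound on the union of covered leaves already handles it, as you note), and your remark that ``the $g(i',x)$ are distinct across $i'$'' is slightly off---the canonical leaves are distinct because their \emph{row} indices differ, not because the positions $g(i',x)$ need be distinct---but neither point affects correctness.
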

\begin{proof}
By construction of \gst, $W$ covers the $m$ leaves $\ell_{1,z_i}, \dots, \ell_{m,z_m}$, with $z_i = g(i,x)$, so we only need to prove that if some $\msaij{i}{x..y_i}$ is not semi-repeat-free, or \emph{invalid}, then $W$ covers more than $m$ leaves, and vice versa.

($\Leftarrow$) Let $\msaij{i}{x..y_i}$ be invalid, i.e.\ $\spell(\msaij{i}{x..y_i})$ occurs in $\alpha_{i'}$ at some position $\hat{z}$ other than $z_{i'}$, for some row $1 \le i' \le m$. Then the node of \gst corresponding to string $\spell(\msaij{i}{x..y_i})$ covers leaf $\ell_{i',\hat{z}} \neq \ell_{i',z_{i'}}$, thus $W$ covers more than $m$ leaves.

($\Rightarrow$) Let $\ell_{i',\hat{z}}$ be a leaf of \gst other than leaves $\ell_{1,z_1}, \dots, \ell_{m,z_m}$ and covered by some node $w \in W$. By construction, $w$ corresponds to $\spell( \msaij{i}{x..y_i} )$ for some $1 \le i \le m$, so we have that $\spell( \msaij{i}{x..y_i} )$ occurs in $\alpha_{i'}$ at some position other than $g(i',x)$, since $\ell_{i',\hat{z}} \neq \ell_{i',z_{i'}}$. Thus, $\msaij{i'}{x..y_i}$ is invalid.
\end{proof}
Note that the correctness of \Cref{lem:gst} does not hold if we swap the semi-repeat-free notion with the repeat-free one.

\Cref{lem:gst}, combined with \Cref{obs:rowsegmentf}, implies that the problem of computing values $f^i(x)$ for all $i \in [1..m]$ can be solved by analyzing the tree structure of \gst against the \msa suffixes.
Indeed, let $L_x \coloneqq \lbrace \ell_{i,z_i} : 1 \le i \le m, z_i = g(i,x+1) \rbrace$ be the leaves of \gst corresponding to the suffixes $\spell(\msaij{i}{x+1..n})$.
For each row $1 \le i \le m$, the first semi-repeat-free prefix of $\spell(\msaij{i}{x+1..n})$ corresponds to the first implicit or explicit node $v$ of \gst in the root-to-$\ell_{i,z_i}$ path such that $v$ covers only leaves in $L_x$.
The fact that \gst is a compacted trie is not an issue: the parent of $v$ in the suffix trie is branching, since it covers more leaves than $v$, so the first explicit node of \gst in the root-to-$\ell_{i,z_i}$ path covering only leaves in $L_x$ is the first explicit descendant $w$ of $v$, thus we can identify $v$ by finding $w$.
Finally, $f^i(x)$ is computed by retrieving the first column index $y$ such that $\spell(\msaij{i}{x+1..y}) = \sstring(\parent(w)) \cdot \cchar(w)$, where $\sstring(u)$ is the concatenation of edge labels of the root-to-$u$ path, and $\cchar(u)$ is the first symbol of the edge label from $\parent(u)$ to $u$. In other words, $y$ corresponds to the $k$-th non-gap symbol of \msa row $i$, with $k = \mathrm{rank}(\msaij{i}{1..n}, x) + \stringdepth(\parent(w)) + 1$, where $\mathrm{rank}(\msaij{i}{1..n}, x)$ is the number of gap symbols in $\msaij{i}{1..x}$ and $\stringdepth(u) = \lvert \sstring(u) \rvert$.
For example, in \Cref{fig:gstmsa} the leaves of $L_0$ have been marked and so have the shallowest ancestors covering only leaves in $L_0$.

\subsection{Exclusive ancestor set}\label{sub:exclusiveancestorset}
The results of the previous section show that we can compute the minimal right extensions by solving multiple instances of the following problem on the tree structure of \gst.
\begin{problem}[Exclusive ancestor set]
Let $T = (V,E,\mathrm{root})$ be a rooted ordered tree, with $L^T \subseteq V$ the set of its leaves. Given $T$ and a subset of leaves $L \subseteq L^T$, find the minimal set $W$ of exclusive ancestors of $L$ in $T$, i.e.\ the minimal set $W \subseteq V$ such that $W$ covers all leaves in $L$ and only leaves in $L$.
Can $T$ be preprocessed to support the efficient solving of multiple instances of the problem?
\end{problem}

As is the case for \gst, we can assume that each internal node of $T$ has at least two children, otherwise, a linear-time processing of $T$ can be employed to compact its unary paths. Indeed, after a linear-time preprocessing of $T$, any instance of exclusive ancestor set can be solved in time $O(\lvert L \rvert)$ by a careful traversal of the tree with the following procedure, that we describe informally:
\begin{enumerate}
    \item partition $L$ in $k$ maximal sets $L_1$, \dots, $L_k$ of leaves contiguous in the ordered traversal of $T$, to be processed independently (if two leaves belong to different contiguous sets, any common ancestor cannot be part of the solution);
    \item for each $L_i$, with $1 \le i \le k $, start from the leftmost leaf $\ell_i$ and ascend in the tree until the closest ancestor of $\ell_i$ that covers some leaf not in $L_i$;
    \item upon failure in step 2., add the last safe ancestor to the solution $W$ and if there are still uncovered leaves in $L_i$ repeat steps 2.\ and 3.\ starting from the leftmost uncovered leaf.
\end{enumerate}
An example of the procedure is shown in \Cref{fig:exclusiveancestorset}. The failure condition of step 2.\ can be evaluated by checking if both the leftmost leaf and rightmost leaf in the subtree of the candidate replacement are still in set $L_i$, and step 2.\ always terminates if we assume that $L$ is a non-trivial instance such that $L \subset L^T$ so the root of $T$ is not the solution to the problem.
\begin{figure}[htp]
\centering
\includegraphics{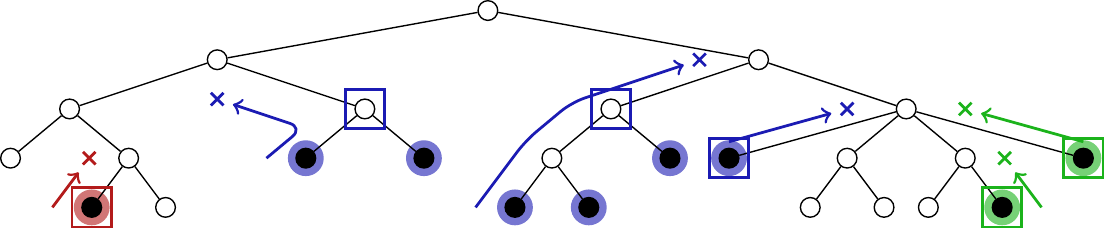}
\caption{Example of an instance of exclusive ancestor set, where the set of leaves $L$ correspond to the leaves filled in black: the algorithm partitions $L$ into sets of contiguous leaves (shown as red, blue, and green leaves), and for each set it finds the exclusive ancestors (marked with rectangles). Each arrow shows the ascent of step 2.\ up the tree until the node corresponding to the failure condition, marked with a cross.}\label{fig:exclusiveancestorset}
\end{figure}

Assuming the leaves of $T$ are sorted, step 1.\ can be implemented efficiently: we can partition $L$ into sets of contiguous leaves by coloring leaves in $L$ and finding all the leaves with the preceding leaf not in $L$.
We can easily preprocess $T$ to support the needed operations in constant time, leading to a time complexity of $O(\lvert L \rvert)$, since any forest built on top of leaves $L$ has $O(\lvert L \rvert)$ nodes.
\begin{lemma}\label{lem:exclusiveancestorset}
The exclusive ancestor set problem on a rooted ordered tree $T = (V,E,\mathrm{root})$ and a subset $L$ of its leaves can be solved in time O$(\lvert L \rvert)$, after a $O(\lvert T \rvert)$-time preprocessing to support operations $v.\leftmostleaf$, $v.\rightmostleaf$ on any node $v \in V$ and operations $\ell.\prevleaf$, $\ell.\nextleaf$, and the binary coloring of any leaf $\ell \in L^T$ in constant time.
\end{lemma}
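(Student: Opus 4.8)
The plan is to make the informal three-step procedure precise, prove it correct by means of two structural facts about exclusive ancestors, and then charge its running time to a forest sitting on top of $L$. The $O(|T|)$ preprocessing has three ingredients, each a single pass over $T$: contracting unary nodes so that every internal node has at least two children (which preserves the answer up to identifying contracted paths); one DFS computing $v.\leftmostleaf$ and $v.\rightmostleaf$ for every node $v$; and recording the left-to-right order of $L^T$ in an array, which supplies $\prevleaf$, $\nextleaf$ and an integer rank for each leaf, plus a bit array over $L^T$ for the coloring. The bit array, together with a small array for the run data below, is allocated once and reused across instances by clearing only the $O(|L|)$ entries it touched, so the per-instance cost carries no hidden sweep of $T$.

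For correctness, given an instance with $L \subset L^T$: I would color $L$, scan its leaves once using $\prevleaf$ to split $L$ into its maximal runs $L_1,\dots,L_k$ of leaves contiguous in leaf order, and record each run's leftmost and rightmost leaf. The first fact is that a node covering leaves from two distinct runs also covers the leaf separating them, which lies outside $L$; hence no such node is an exclusive ancestor, and the answer $W$ is the disjoint union of the exclusive ancestor sets $W_i$ of the runs. The second fact is that for a single contiguous run $L_i$ the minimal exclusive ancestor set is unique and equals $\{a(\ell) : \ell \in L_i\}$, where $a(\ell)$ is the highest ancestor of $\ell$ all of whose subtree leaves lie in $L_i$: the distinct values $a(\ell)$ form an antichain whose subtrees partition $L_i$, any exclusive cover must, for each $\ell$, use a node of the root-to-$\ell$ path that is a descendant of $a(\ell)$, and the only single node whose subtree covers all leaves below $a(\ell)$ is $a(\ell)$ itself, so no cover is smaller. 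The procedure realizes this set one run at a time: from the leftmost still-uncovered leaf $\ell$ of $L_i$, ascend by $\parent$ while the current node is \emph{safe}, i.e.\ both its $\leftmostleaf$ and $\rightmostleaf$ belong to $L_i$ --- an $O(1)$ test using the recorded run boundaries, where contiguity of the run is what lets a two-endpoint check certify the whole subtree; safety is monotone going up, and since the root is unsafe (as $L \subset L^T$) the ascent reaches a first unsafe node, whose child is the last safe node $a(\ell)$, which is added to $W$; then jump to $\nextleaf$ of $a(\ell).\rightmostleaf$ and repeat until $L_i$ is exhausted.

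For the running time, the partitioning is $O(|L|)$, and within a run $L_i$ every node the algorithm touches is either safe --- hence in the forest $F_i := \bigcup_{v \in W_i} T_v$ of the subtrees rooted at the selected nodes --- or one of the $|W_i|$ first-unsafe stopping nodes, each being the parent of a selected node and therefore outside $F_i$. The subtrees $T_v$ for $v \in W_i$ are pairwise disjoint, so the ascent paths are pairwise node-disjoint and jointly use at most $|F_i|$ nodes; and since every internal node has at least two children, $|T_v| \le 2\ell_v - 1$ where $\ell_v$ is the number of leaves of $T_v$, and the $\ell_v$ partition $L_i$, whence $|F_i| \le 2|L_i| - |W_i|$. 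Thus run $L_i$ costs $O(|L_i|)$, the whole greedy costs $\sum_i O(|L_i|) = O(|L|)$, and adding the $O(|L|)$ cleanup yields $O(|L|)$ overall. I expect the delicate point to be precisely this linear bound on the ``wasted'' exploration: it rests on the disjointness of the $W_i$-subtrees and on the at-least-two-children assumption, and it also forces the subtlety that the safety test must be made against the current run rather than against $L$, which is exactly why the contiguity of runs must be exploited.
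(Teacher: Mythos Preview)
Your proposal is correct and follows the same three-step procedure the paper describes: partition $L$ into maximal contiguous runs, within each run ascend from the leftmost uncovered leaf until the $\leftmostleaf$/$\rightmostleaf$ test fails, output the last safe node, and bound the total work by the size of the forest sitting on top of $L$. Your writeup is considerably more rigorous than the paper's own discussion, which simply asserts that ``any forest built on top of leaves $L$ has $O(\lvert L\rvert)$ nodes'' without spelling out the disjointness of the selected subtrees or the $2\lvert L_i\rvert - \lvert W_i\rvert$ bound; your explicit correctness argument via $a(\ell)$ and the antichain property is likewise absent from the paper.
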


\subsection{Computing the minimal right extensions}\label{sub:minimalrightextensions}
Returning to the problem of computing values $f(x)$, the representation of \gst needs to support the operations on its tree structure described by \Cref{lem:exclusiveancestorset} plus operations $v.\stringdepth$, returning the length of the string corresponding to the root-to-$v$ path in \gst of an explicit node $v$, and $\ell.\suffixlink$, implementing the suffix links of the leaves.
The final algorithm, described in \Cref{alg:minimalrightextensions}, computes leaf sets $L_0$, $L_1$, \dots, $L_{n-1}$ corresponding to the \msa suffixes starting at column $1, 2, \dots, n$, respectively, and for each $L_x$:
\begin{enumerate}
    \item it marks the leaves in $L_x$ and partitions them in sets of contiguous leaves, by finding all their left boundaries $\ell$ such that $\ell.\prevleaf$ is not marked;
    \item it solves the exclusive ancestor set problem on each set of contiguous leaves and whenever it finds an exclusive ancestor, covering leaves $\ell_{i_1}, \dots, \ell_{i_k}$, it computes values $f^i(x)$ for $i \in \lbrace i_1, \dots, i_k \rbrace$ (see the conclusion of \Cref{sub:gst});
    \item after processing all leaves, it finally computes $f(x) = \max_{i=1}^{m} f^i(x)$ and transforms $L_x$ into $L_{x+1}$ by taking the suffix links of only leaves $\ell_{i}$ such that $\msaij{i}{x+1} \neq \gap$.
\end{enumerate}

\begin{algorithm}
\SetKw{KwOutput}{output}
\SetKwRepeat{Do}{do}{while}
\KwIn{$\msaij{1..m}{1..n}$ from alphabet $\Sigma \cup \lbrace \mathtt{-} \rbrace$}
\KwOut{Pairs $(x,f(x))$ for $x = 0, \dots, n-1$}
Preprocess the \msa rows to support select and rank queries\;
Build \gst, the generalized suffix tree of $\lbrace \spell(\msaij{i}{1..n}) \$_i : 1 \le i \le m \rbrace$\;
Initialize $\mathtt{L}[i]$ as the leaf of \gst corresponding to $\spell(\msaij{i}{1..n})$\;
\For{$x \gets 0 \;\KwTo\; n-1$}{
    \For(\tcp*[f]{Mark each leaf of $L_x$}){$i \gets 1 \;\KwTo\; m$}{
        $\mathtt{L}[i].\mathrm{marked} \gets \atrue$\;
    }
    \tcp{Process each set of contiguous leaves}
    \For{$\mathrm{i} \in [1..m] : \mathtt{L}[i].\mathrm{prevleaf}.\mathrm{marked} = \afalse$}{
        $\mathrm{lb} \gets \mathtt{L}[i]$\tcc*[r]{Find left and right boundaries}
        $\mathrm{rb} \gets \mathrm{lb}$\;
        \While{$\mathrm{rb}.\mathrm{nextleaf}.\mathrm{marked} = \atrue$}{
            $\mathrm{rb} \gets \mathrm{rb}.\mathrm{nextleaf}$\;
        }
        $w \gets \mathrm{lb}; \; \mathrm{lleaf} \gets \mathrm{lb}; \;  \mathrm{rleaf} \gets \mathrm{lb}$\tcc*[r]{Find the exclusive ancestors}
        \While{$\mathrm{rleaf} \le \mathrm{rb}$}{
            $w' \gets w.\mathrm{parent}; \; \mathrm{lleaf}' \gets w'\!.\mathrm{leftmostleaf}; \; \mathrm{rleaf}' \gets w'\!.\mathrm{rightmostleaf}$\;
            \uIf(\tcp*[f]{$w'$ is a correct replacement}){$\mathrm{lb} \le \mathrm{lleaf}' \,\wedge\, \mathrm{rleaf}' \le \mathrm{rb}$}{
                $w \gets w'; \; \mathrm{lleaf} \gets \mathrm{lleaf}'; \; \mathrm{rleaf} \gets \mathrm{rleaf}'$\;
            }
            \Else(\tcp*[f]{$w'$ fails so $w$ is an exclusive ancestor}){
                $\ell \gets \mathrm{lleaf}$\;
                \While{$\ell \le \mathrm{rleaf}$}{
                    $i' \gets \ell.\mathrm{getrow}$\;
                    $g \gets (w.\mathrm{parent}).\mathrm{stringdepth} + 1$\;
                    $\mathtt{f}[i'] \gets \msaij{i'}{1..n}.\mathrm{select}( \msaij{i'}{1..n}.\mathrm{rank}(x) + g )$\;
                    $\ell \gets \ell.\mathrm{nextleaf}$\;
                }
                $w \gets \ell; \; \mathrm{lleaf} \gets \ell; \;  \mathrm{rleaf} \gets \ell$\;
            }
        }
    }
    \KwOutput{$(x,\max_{i=1}^{m}\mathtt{f}[i])$}\;
    \For(\tcp*[f]{Cleanup and compute $L_{x+1}$}){$i \gets 1 \;\KwTo\; m$}{
        $\mathtt{leaf}[i].\mathrm{marked} \gets \afalse$\;
        \If{$\msaij{i}{x+1} \neq \mathtt{-}$}{
            $\mathtt{leaf}[i] \gets \mathtt{leaf}[i].\suffixlink$\;
        }
    }
}
\caption{Algorithm computing the minimal right extensions $f(x)$, for $0 \le x \le n - 1$. 
For simplicity, we use the total order $\le$ to check the failure condition of the exclusive ancestor set problem, but it is easy to see that $\le$ can be replaced by an additional binary coloring of the leaves, to check if a leaf is contained in a set of contiguous leaves of interest.}\label{alg:minimalrightextensions}
\end{algorithm}

\begin{theorem}\label{teo:minimalrightextensions}
Given $\msaij{1..m}{1..n}$, we can compute the minimal right extensions $f(x)$ for $0 \le x \le n-1$ in time $O(mn)$.
\end{theorem}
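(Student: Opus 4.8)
The plan is to verify that Algorithm~\ref{alg:minimalrightextensions} correctly computes the values $f(x)$ and runs in $O(mn)$ time, relying on the structural results already established. For correctness, I would first invoke \Cref{lem:gst} together with \Cref{obs:rowsegmentf}: the value $f^i(x)$ is determined by the shallowest node on the root-to-$\ell_{i,z_i}$ path in \gst that covers only leaves of $L_x$, and as argued at the end of \Cref{sub:gst}, this node is captured by the first explicit descendant $w$, from which $f^i(x)$ is recovered via a rank/select computation on row $i$. Since $f(x) = \max_{i=1}^m f^i(x)$ by \Cref{obs:rowsegmentf}, it suffices to show that the inner loops of the algorithm correctly enumerate, for each $x$, the exclusive ancestor set of $L_x$ and assign to each covered leaf $\ell_{i',\hat z}$ the correct $f^{i'}(x)$. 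This is exactly the three-step traversal procedure whose correctness underlies \Cref{lem:exclusiveancestorset}: the outer \textbf{for} over left boundaries implements step~1 (partition into maximal contiguous marked blocks, detected by $\mathtt{L}[i].\mathrm{prevleaf}.\mathrm{marked}=\afalse$), and the nested \textbf{while} loops implement steps~2 and~3 (ascend while both $\mathrm{leftmostleaf}$ and $\mathrm{rightmostleaf}$ of the candidate parent stay within $[\mathrm{lb}..\mathrm{rb}]$, emit $w$ as an exclusive ancestor on failure, restart from the next uncovered leaf). The computation $g \gets (w.\mathrm{parent}).\mathrm{stringdepth}+1$ and $\mathtt{f}[i'] \gets \mathrm{select}(\mathrm{rank}(x)+g)$ matches the formula $k = \mathrm{rank}(\msaij{i}{1..n},x) + \stringdepth(\parent(w)) + 1$ derived in \Cref{sub:gst}, translating a string depth in \gst back into an \msa column via the gap count. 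Finally, I would note that the addition of the distinct terminators $\$_i$ (equivalently, a virtual column $n+1$) guarantees that every $L_x$ is a proper subset of the leaves, so step~2 always terminates below the root and $f(x)$ is well-defined (possibly $n+1$), and that transforming $L_x$ into $L_{x+1}$ by following suffix links exactly on the leaves $\ell_i$ with $\msaij{i}{x+1}\neq\gap$ is correct because $\spell(\msaij{i}{x+2..n})$ is obtained from $\spell(\msaij{i}{x+1..n})$ by dropping one leading symbol precisely when column $x+1$ of row $i$ is not a gap, and is unchanged otherwise.

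For the running time, the dominant cost is the preprocessing and the $n$ iterations of the outer loop. Building \gst on strings of total length $O(mn)$ takes $O(mn)$ time~\cite{Farach97}, as does the rank/select preprocessing of the $m$ rows; the $O(\lvert T\rvert)$-time preparation of \gst for the operations of \Cref{lem:exclusiveancestorset} (plus $\stringdepth$ and leaf suffix links) is also $O(mn)$ since $\lvert\mgst\rvert = O(mn)$. Within a fixed iteration $x$, marking and unmarking the $m$ leaves of $L_x$ and the final suffix-link update are each $O(m)$; the exclusive ancestor set computation on $L_x$ costs $O(\lvert L_x\rvert) = O(m)$ by \Cref{lem:exclusiveancestorset}, because the forest traversed by the nested \textbf{while} loops has $O(m)$ nodes and each $\mathtt{f}[i']$ assignment is done once per covered leaf in $O(1)$ time. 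Summing over $x = 0, \dots, n-1$ gives $O(mn)$ total, which is linear in the input size.

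The one point that deserves care—and which I expect to be the main obstacle in a fully rigorous write-up—is the amortized analysis of the tree ascent in step~2 matching the $O(\lvert L\rvert)$ bound of \Cref{lem:exclusiveancestorset}: a naive reading suggests a single ascent could walk a long path, but the key invariant is that each explicit node visited by a successful replacement lies in the forest spanned by $L_x$ over \gst, which has at most $2\lvert L_x\rvert - 1$ nodes, and each such node is charged at most once across the whole processing of $L_x$ because after a failure we jump past the subtree just consumed. I would make this precise by arguing that the sequence of nodes assigned to $w$ across all restarts within one $L_x$ is strictly increasing in the left-to-right forest order, so the total work is $O(\lvert L_x\rvert)$; combined with $\sum_x \lvert L_x\rvert \le mn$, the bound follows. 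Everything else is a routine check that the constant-time primitives assumed by \Cref{lem:exclusiveancestorset} are available on \gst after linear preprocessing.
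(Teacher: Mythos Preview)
Your proposal is correct and follows essentially the same approach as the paper: correctness via \Cref{obs:rowsegmentf}, \Cref{lem:gst}, and \Cref{lem:exclusiveancestorset}, and the $O(mn)$ bound via linear-time construction of \gst plus rank/select preprocessing and an $O(\lvert L_x\rvert + m) = O(m)$ cost per iteration. Your write-up is considerably more detailed than the paper's terse proof---in particular your amortized argument for the tree ascent is really a re-justification of \Cref{lem:exclusiveancestorset} rather than new content needed for the theorem---but nothing in it deviates from the paper's line of reasoning.
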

\begin{proof}
The correctness is given by \Cref{obs:rowsegmentf} and \Cref{lem:gst,lem:exclusiveancestorset}. The construction of \gst is equivalent to building a string of length smaller or equal than $(m+1) \cdot n$ and its suffix tree: a suffix tree supporting the required operations in constant time can be constructed in $O(mn)$ time \cite{Farach97}, assuming an integer alphabet of size smaller or equal than $(m+1) \cdot n$. Plus, we can preprocess the \msa rows to answer in constant time rank and select queries on the position of gap and non-gap symbols~\cite{jacobson1988succinct}. Finally, the computation of each $f(x)$ takes time $O(\lvert L_x \rvert + m) = O(m)$, so $O(mn)$ time in total.
\end{proof}

\begin{corollary}
Given $\msaij{1..m}{1..n}$ from $\Sigma \cup \lbrace \gap \rbrace$, with $\Sigma$ an integer alphabet of size $O(mn)$, the construction of an optimal semi-repeat-free segmentation minimizing the maximum number of blocks can be done in time $O(mn)$.
\end{corollary}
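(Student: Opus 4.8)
The plan is to combine the linear-sized preprocessing of \Cref{teo:minimalrightextensions} with the dynamic program of \Cref{eq:score} specialized to case $i$ (set $\bigoplus = \max$, $g(s(j'),j',j) = s(j')+1$, $s(0)=0$), and to argue that once the minimal right extensions $f(x)$ are available this recurrence can be evaluated, and an optimal segmentation recovered, in only $O(n)$ additional time; by \Cref{lem:characterization} an optimal such segmentation yields an optimal semi-repeat-free \efg.

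First I would invoke \Cref{teo:minimalrightextensions} to compute all pairs $(x,f(x))$, $0 \le x \le n-1$, in $O(mn)$ time. Since by the terminator convention $f(x) \in [x+1..n+1]$, I would then sort these pairs by their second component in $O(n)$ time with counting sort. This matters because, by \Cref{def:rightextensions} and \Cref{obs:rightextensions}, a segment $[j'+1..j]$ is semi-repeat-free exactly when $f(j') \le j$, so \Cref{eq:score} in case $i$ becomes $s(j) = 1 + \max\{\, s(j') : 0 \le j' < j,\ f(j') \le j \,\}$, with $s(j) = -\infty$ when no such $j'$ exists.

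Next I would evaluate this recurrence by a single left-to-right sweep over $j = 1, \dots, n$. The one thing to watch is that the maximum ranges only over the \emph{activated} indices $\{ j' : f(j') \le j \}$, not over all $j' < j$, so a direct evaluation is quadratic. Two observations fix this. First, $s$ is non-decreasing: if $[x+1..j]$ is semi-repeat-free then so is $[x+1..j+1]$ by \Cref{obs:rightextensions}, hence any $k$-block segmentation of $\msaij{1..m}{1..j}$ extends to a $k$-block segmentation of $\msaij{1..m}{1..j+1}$. Second, an index $j'$ becomes activated precisely at step $j = f(j') > j'$, by which point $s(j')$ has already been computed. So I would keep a running maximum $M$ of $s(j')$ over the activated indices, together with an argmax pointer; at step $j$, I read off from the sorted list (in amortized constant time) the indices $j'$ with $f(j') = j$, update $M$ with their values $s(j')$, and set $s(j) = M + 1$ (or $-\infty$ if nothing has been activated). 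Each index activates once and each column is processed once, so this is $O(n)$.

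Finally I would run a standard traceback using the stored argmax pointers to recover an optimal sequence of cut points $0 = j_0 < j_1 < \dots < j_b = n$, again in $O(n)$ time; this is the optimal semi-repeat-free segmentation (and the corresponding \efg can then be read off the \msa in $O(mn)$). The total time is $O(mn) + O(n) = O(mn)$, dominated by \Cref{teo:minimalrightextensions}. I do not expect a genuine obstacle: the only step that is not entirely routine is making the recurrence evaluation linear rather than quadratic, and that is taken care of by processing the pairs $(x,f(x))$ in increasing order of $f(x)$ and using the monotonicity of $s$; the rest is black-box use of \Cref{teo:minimalrightextensions} plus bookkeeping.
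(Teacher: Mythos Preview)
Your proposal is correct and follows essentially the same route as the paper: the paper's proof simply invokes \Cref{teo:minimalrightextensions} for the $O(mn)$ preprocessing, sorts the pairs $(x,f(x))$ by second component, and then cites \cite[Algorithm~1]{Equietal21} as an $O(n)$ black box for the recurrence, which is exactly the left-to-right sweep with a running maximum that you spell out. A small remark: your monotonicity observation on $s$ is true but not actually needed for the running-max argument---the only thing you use is that $f(j')>j'$, so $s(j')$ is already final when $j'$ is activated.
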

\begin{proof}
The algorithm \cite[Algorithm 1]{Equietal21} by Equi et al.\ solves the problem in $O(n)$ time, assuming it is given the minimal right extensions $(x,f(x))$ ordered by the second component, which we can now compute and sort in time $O(mn)$ thanks to \Cref{teo:minimalrightextensions}.
\end{proof}

\begin{myobservation}
Note that in the setting with gaps the correctness of \Cref{lem:gst} and \Cref{alg:minimalrightextensions} does not hold if we swap the notion of semi-repeat-freeness with the one of repeat-freeness, because \gst does not explicitly contain the information on what are the ending columns of the strings considered.
\end{myobservation}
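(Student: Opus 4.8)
The plan is to disprove both implications at once with one small counterexample, and then to isolate the structural ingredient that is provably absent from \gst. First I would recall from \Cref{lem:characterization} that a segment $[x..y]$ is repeat-free, and not merely semi-repeat-free, exactly when, on top of each $\spell(\msaij{i}{x..y})$ occurring in row $i'$ only at the legal start $g(i',x)$, that occurrence also \emph{ends} at column $g(i',y)$; equivalently, for no $i,i'$ is $\spell(\msaij{i}{x..y})$ a proper prefix of $\spell(\msaij{i'}{x..y})$. The crucial point is that the right endpoint $y$ does not correspond to a single string but to a per-row length, namely $y-x+1$ minus the number of gaps in $\msaij{i}{x..y}$; once the gap symbols are deleted, the nodes of \gst are \emph{strings} and carry no record of these per-row lengths.

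Next I would make the failure of \Cref{lem:gst} explicit. Take a segment $[x..y]$ that is semi-repeat-free but not repeat-free, witnessed by rows $i,i'$ with $\spell(\msaij{i}{x..y})$ a proper prefix of $\spell(\msaij{i'}{x..y})$ --- this is exactly the situation of the label $\mathtt{AGC}$ in the \efg of \Cref{fig:segmentation}, where $\mathtt{AGC}$ prefixes another label of the same block. Let $W$ be the node set of \Cref{lem:gst}. Since the segment \emph{is} semi-repeat-free, \Cref{obs:rowsegmentf} and \Cref{lem:gst} still guarantee that $W$ covers exactly $m$ leaves; yet the node $w\in W$ spelling $\spell(\msaij{i}{x..y})$ lies strictly above the node spelling $\spell(\msaij{i'}{x..y})$ on the root-to-$\ell_{i',g(i',x)}$ path, so the single leaf that $W$ uses to represent row $i'$ actually stands for an occurrence of $\spell(\msaij{i}{x..y})$ in row $i'$ that stops before column $y$. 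Hence ``$W$ covers exactly $m$ leaves'' no longer characterizes repeat-freeness, and the biconditional of \Cref{lem:gst}, read with the repeat-free notion, is false.

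Finally I would transfer this to \Cref{alg:minimalrightextensions}. Its structural core --- ascending from each leaf of $L_x$ to the shallowest explicit ancestor covering only leaves of $L_x$, i.e.\ the exclusive ancestor set of \Cref{sub:exclusiveancestorset} --- computes, for each row $i$, the shortest \emph{semi}-repeat-free prefix of $\spell(\msaij{i}{x+1..n})$, while the subsequent $\mathrm{rank}$/$\mathrm{select}$ step merely converts the string depth of the node it found back into a column. Nothing in this process checks that the ancestor has equal string length in every row it covers, which is precisely the extra requirement separating repeat-free from semi-repeat-free; consequently the algorithm reports minimal semi-repeat-free right extensions, which by the counterexample above do not in general agree with the minimal repeat-free ones. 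To argue that this is a genuine limitation of \gst rather than of this particular algorithm, I would finally exhibit two \msas sharing identical gaps-removed rows --- hence the identical \gst, leaves included --- but with different gap placements, such that a common segment is semi-repeat-free in both yet repeat-free in only one; this reduces the claim that \gst ``does not explicitly contain the information on what are the ending columns of the strings considered'' to displaying one such pair. The main (and only mild) obstacle is arranging the two inputs cleanly, which is straightforward since the suffix tree depends only on the gaps-removed rows.
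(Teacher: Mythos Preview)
The paper does not actually prove this observation; it is stated as a one-line remark whose entire justification is the clause ``because \gst does not explicitly contain the information on what are the ending columns of the strings considered.'' Your proposal is therefore not competing with a proof in the paper but rather supplying one where none exists.

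That said, your argument is sound and does more than the paper. Your first two steps correctly pinpoint where the biconditional of \Cref{lem:gst} breaks: a segment that is semi-repeat-free but not repeat-free (the block containing $\mathtt{AGC}$ in \Cref{fig:segmentation} is exactly such a witness) has a node set $W$ covering precisely $m$ leaves, yet one of those $m$ leaves encodes an occurrence ending before column~$y$, so leaf-count alone cannot detect the proper-prefix violation. Your transfer to \Cref{alg:minimalrightextensions} is also accurate: the exclusive-ancestor step and the subsequent $\mathrm{rank}/\mathrm{select}$ translation never compare string depths across rows covered by the same ancestor, which is the missing check.

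Your final step --- two \msas with identical gaps-removed rows (hence identical \gst) but with a segment repeat-free in one and not the other --- genuinely strengthens the paper's informal ``does not contain the information'' claim into an impossibility statement about any algorithm using \gst alone. A concrete pair that works: rows $(\mathtt{A\text{-}B},\,\mathtt{ACB})$ versus $(\mathtt{AB\text{-}},\,\mathtt{ACB})$, with segment $[1..2]$ semi-repeat-free in both but repeat-free only in the second. This is more than the paper asserts, so if you include it, be clear that it is an additional sharpening rather than something the paper proves.
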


\section{Minimizing the maximum block length}\label{sect:minmaxlength}
The improvement on the computation of the minimal right extensions in the case of general \msas from $O(nm \log m)$ to $O(nm)$ gives us the motivation to improve the $O(n \log \log n)$-time algorithm of Equi et al.~\cite[Algorithm 2]{Equietal21} for an optimal semi-repeat-free segmentation minimizing the maximum block length.
As mentioned in \Cref{sub:efgconstruction}, we can compute $s(j)$ by processing the recursive solutions corresponding to all right extensions $(x,f(x))$ with $f(x) \le j$. For the maximum block length there are two types of recursion for an optimal solution of $\msaij{1..m}{1..j'}$ using semi-repeat-free $[x+1..j']$ as its last segment:
\begin{description}
    \item[non-leader recursion:] if $j' \le x + s(x)$ then the score of $s(j')$ is equal to $s(x)$, because the length of segment $[x+1..j']$ is less than or equal to $s(x)$; in this case, we say that $[x+1..j']$ is a \emph{non-leader} segment;
    \item[leader recursion:] otherwise, if $j' > x + s(x)$, we say that $[x+1..j']$ is a \emph{leader segment}, since it gives score $j' - x$ to an optimal solution constrained to use it as its last segment.
\end{description}
\begin{figure}[htp]
\centering
\includegraphics{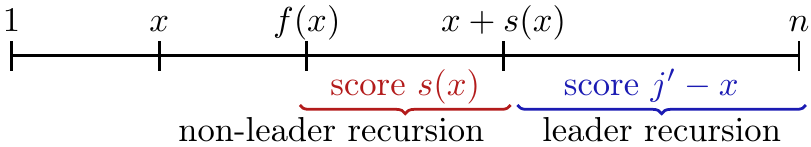}
\caption{Scheme for the score of an optimal semi-repeat-free segmentation of $\msaij{1..m}{1..j'}$ constrained to use $[x+1..j']$ as its last segment, with $(x,f(x))$ the minimal right extension of $\msaij{1..m}{1..x}$.}
\end{figure}
Note that if $x + s(x) < f(x)$ then the first type of recursion does not happen for $(x,f(x))$. Then, it is easy to see that
\begin{equation}
    s(j) =
    \min \Bigg(
        \min_{\substack{(x,f(x)) : \\ f(x) \le j \le x + s(x)}} s(x),
        \min_{\substack{(x,f(x)) : \\ j > f(x) \;\wedge\; j > x + s(x)}} j - x
    \Bigg)
\end{equation}
so Equi et al.\ correctly solve the problem by keeping track of the two types of recursions with two data structures: the first keeps track of ranges $[f(x)..x + s(x)]$ with score $s(x)$, the second tracks ranges $[x + s(x) + 1..n]$ where the leader recursion has to be used, reaching a global time of $O(n \log \log n)$.

Instead, we can reach a linear time complexity using simpler data structures, thanks to the following observations:
\begin{itemize}
    \item
the data structure for the leader recursion can be replaced by a single variable $S$ holding value $\min \lbrace j - x : j > f(x) \wedge j > x + s(x) \rbrace$, so that $S$ is the best score of a segmentation ending with a leader segment $[x+1..j]$;
    \item
for the non-leader recursion, we can swap the structure of Equi et al.\ with an equivalent array $\mathtt{C}[1..n]$ such that $\mathtt{C}[i]$ counts the number of available solutions with score $i$ using the non-leader recursion so that a variable $I = \min \lbrace i : \mathtt{C}[i] > 0 \rbrace$ is equal to the best score of a segmentation ending with a non-leader segment $[x+1..j]$.
\end{itemize}
The final and crucial observation is that the two types of recursion are closely related: when $[x+1..j]$ goes from being a non-leader segment to being a leader one, i.e.\ $j = x + s(x) + 1$, we decrease $\mathtt{C}[s(x)]$ by one and update $S$ with value $s(x) + 1 = j - x$ if needed. Thus, when the best score of $\mathtt{C}[1..n]$ is removed in this way, we do not need to update $I$ to $\min \lbrace  i : \mathtt{C}[i] > 0 \rbrace$, but it is sufficient to increment $I$ by $1$ to make sure that $s(j) = \min (I,S)$, unless other updates of $\mathtt{C}$ and $S$ result in a better score.
Thus, in order to compute $s(j + 1)$ variable $I$ needs to be incremented by one if $\mathtt{C}[i] = 0$ and variable $S$ also needs to be incremented by one, and the algorithm performs a constant number of operations on $\mathtt{C}$ for each $s(j)$ other than the $O(n)$ global time spent in managing the minimal right extensions.

\begin{algorithm}
\KwIn{Minimal right extensions $(x_1, f(x_1)), \dots, (x_n, f(x_n))$ sorted from smallest to largest order by the second component}
\KwOut{Score of an optimal semi-repeat-free segmentation minimizing the maximum block length}
Initialize array $\mathtt{C}[1..n]$ with values in $[0..n]$ and set all values to $0$\;
Initialize array $\mathtt{L}[1..n]$ as empty linked-lists with values pointers to $(x, f(x))$\;
$\mathtt{minmaxlength}[0] \gets 0$\;
$y \gets 1; \; I \gets 1; \; S \gets \infty$\;
\For{$j \gets 1 \;\KwTo\; n$}{
    \While(\tcp*[f]{Process minimal right extensions}){$j = f(x_y)$}{
        \uIf(\tcp*[f]{Non-leader recursion}){$j \le x_y + \mathtt{minmaxlength}[x_y]$}{
            $\mathtt{C}[\mathtt{minmaxlength}[x_y]] \gets \mathtt{C}[\mathtt{minmaxlength}[x_y]] + 1$\;
            $I.\mathrm{Update} ( \mathtt{minmaxlength}[x_y] )$\;
            $\mathtt{L}[x_y + \mathtt{minmaxlength}[x_y] + 1].\mathrm{add}\big( (x_y, f(x_y)) \big)$\;
        }
        \Else(\tcp*[f]{Leader recursion}){
            $S.\mathrm{Update}( j - x_y )$\;
        }
        $y \gets y + 1$\;
    }
    \For(\tcp*[f]{Update non-leader recursions into leader ones}){$(x,f(x)) \in \mathtt{L}[j]$}{
        $\mathtt{C}[\mathtt{minmaxlength}[x]] \gets \mathtt{C}[\mathtt{minmaxlength}[x]] - 1$\;
        $S.\mathrm{Update}(j - x)$ \tcp*{$j - x = s(x) + 1$}
    }

    \uIf{$\mathtt{C}[I] > 0$}{
        $\mathtt{minmaxlength}[j] \gets \min(I, S)$\;
    }
    \Else{
        $\mathtt{minmaxlength}[j] \gets S$\;
    }

    $S \gets S + 1$ \tcp*{Update the data structures for next iteration}
    \If{$\mathtt{C}[I] = 0$}{
        $I \gets I + 1$\;
    }
}
\Return{$\mathtt{minmaxlength}[n]$}\;
\caption{Main algorithm to find the optimal score of a semi-repeat-free segmentation minimizing the maximum block length. Operations $I.\mathrm{Update}$ and $S.\mathrm{Update}$ replace the current value of the variable with the input, if the input is smaller.}\label{alg:minmaxlength}
\end{algorithm}

\begin{theorem}
Given the minimal right extensions $(x,f(x))$ of $\msaij{1..m}{1..n}$, we can compute in time $O(n)$ the score of an optimal semi-repeat-free segmentation minimizing the maximum block length.
\end{theorem}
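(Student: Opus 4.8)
The plan is to prove, by induction on $j$, that \Cref{alg:minmaxlength} maintains $\mathtt{minmaxlength}[j] = s(j)$ for every $j$, where $s$ is the recurrence obtained from \eqref{eq:score} with $\bigoplus = \min$ and $g(s(x),x,j) = \max(s(x), j-x)$; the $O(n)$ bound then follows from a direct accounting. Following the discussion preceding the algorithm, I would first fix the partition of the minimands of $s(j)$ into the \emph{non-leader} contributions $s(x)$, for $x$ with $f(x) \le j \le x + s(x)$ (so that $[x+1..j]$ is semi-repeat-free by \Cref{obs:rightextensions} and short enough that $j - x \le s(x)$), and the \emph{leader} contributions $j - x$, for $x$ with $f(x) \le j$ and $x + s(x) < j$; this split is immediate from the definition of $g$, and it lets us write $s(j)$ as the minimum of the best non-leader score and the best leader score at column $j$, each with the convention that an empty minimum is $+\infty$.

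The core is the following loop invariant, which I would show holds right before $\mathtt{minmaxlength}[j]$ is assigned: (i) $\mathtt{minmaxlength}[j'] = s(j')$ for all $j' < j$; (ii) $\mathtt{C}[i] = \lvert \{ x : f(x) \le j \le x + s(x),\ s(x) = i \} \rvert$ for every $i$, so that $\mu_j \coloneqq \min\{ i : \mathtt{C}[i] > 0 \}$ is the best non-leader score at column $j$; (iii) $S$ equals the best leader score at column $j$; (iv) $I \le \mu_j$; and (v) if $\mathtt{C}[I] = 0$ then $S \le \mu_j$. Granting the invariant, correctness of the assignment is immediate: if $\mathtt{C}[I] > 0$ then $I$ witnesses $\mu_j$, so by (iv) $I = \mu_j$ and the algorithm outputs $\min(I,S) = \min(\mu_j,S) = s(j)$ by (ii)--(iii); if $\mathtt{C}[I] = 0$ then by (iv)--(v) $S \le \mu_j$ and $S$ is the best leader score, so the algorithm correctly outputs $S = \min(\mu_j, S) = s(j)$. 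Items (ii) and (iii) are bookkeeping: a minimal right extension $(x,f(x))$ is inserted into $\mathtt{C}[s(x)]$ at column $f(x)$ exactly when it is still a non-leader, it is removed from $\mathtt{C}[s(x)]$ at column $x+s(x)+1$, which is the unique column whose list $\mathtt{L}[\cdot]$ contains it, and every leader, the first time it becomes one---through the \textbf{else} branch of the first inner loop when $f(x) > x+s(x)$, or through the $\mathtt{L}[j]$ loop with value $s(x)+1 = j-x$ when $f(x) \le x+s(x)$---feeds its value $j-x$ into $S$ via $S.\mathrm{Update}$, after which the single increment $S \gets S + 1$ keeps $S$ in step with the column, because by \Cref{obs:rightextensions} a segment that is a leader at column $j$ is still a valid leader at every larger column.

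The delicate part, which I expect to be the main obstacle, is items (iv)--(v): the algorithm deliberately refuses to recompute $\min\{i:\mathtt{C}[i]>0\}$ when the block $\mathtt{C}[I]$ empties, and only bumps $I$ by one. The key observation is that $\mathtt{C}[I]$ can drop to $0$ only through a decrement, hence only through the $\mathtt{L}[j]$ loop processing some $(x,f(x))$ with $s(x) = I$, which at that same iteration executes $S.\mathrm{Update}(j-x) = S.\mathrm{Update}(s(x)+1) = S.\mathrm{Update}(I+1)$; so at the moment $I$ starts lagging behind $\mu$ we have $S \le I+1$, and since from then on both $S$ and $I$ are increased by exactly one per iteration (the latter by the \textbf{if} $\mathtt{C}[I]=0$ clause) until $\mathtt{C}[I]$ is positive again, the chain $S \le I+1 \le \mu$ is preserved, which gives (v); any genuinely smaller non-leader or leader arriving in between is folded in by the corresponding $\mathrm{Update}$ and either ends the lag (restoring $I = \mu$) or only lowers $S$. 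I would state (iv)--(v) precisely and verify that they are preserved across one pass of the outer loop---the processing of the extensions with $f(x_y) = j$, the $\mathtt{L}[j]$ loop, the two unit increments, and the reset for column $j+1$---taking care of the degenerate regime in which no non-leader has ever been created ($\mu_j = +\infty$, $\mathtt{C}\equiv 0$), where (v) is vacuous; extensions with $s(x) = \infty$ contribute nothing and may be ignored throughout.

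Finally, the running time: the outer loop runs $n$ times; the array initializations and the final read of $\mathtt{minmaxlength}[n]$ cost $O(n)$; the first inner while loop only advances the pointer $y$, so across all $j$ it touches each of the $n$ minimal right extensions once; each extension is appended to at most one list $\mathtt{L}[\cdot]$ and later removed from it once, so the $\mathtt{L}[j]$ loops also cost $O(n)$ in total; and every remaining operation in an iteration---the two $\mathrm{Update}$s, the constant-time test whether $\mathtt{C}[I] > 0$, the two unit increments, and the branching---is $O(1)$. Hence the total time is $O(n)$.
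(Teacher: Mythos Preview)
Your proposal is correct and follows essentially the same approach as the paper: the key invariant you isolate---when $\mathtt{C}[I]=0$ one has $\mathtt{C}[i']=0$ for all $i'\le I$ and $S\le I+1$---is exactly the fact the paper singles out, and your running-time accounting matches the paper's. The only difference is granularity: the paper gives a two-line sketch that leans on the correctness of Equi et al.'s earlier algorithm, whereas you unfold the invariant into items (i)--(v) and verify its preservation step by step, which is more self-contained but not a different idea.
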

\begin{proof}
The correctness of the algorithm, implemented in \Cref{alg:minmaxlength}, follows from that of \cite[Algorithm 2]{Equietal21} and from the fact that when $\mathtt{C}[I] = 0$ we have that $\mathtt{C}[j'] = 0$ for $1 \le j' \le I$ and $S \le I + 1$. Similarly, the processing of minimal right extensions $(x,f(x))$ and the dynamic management of intervals $[f(x), s(x) + j']$ takes time $O(n)$ in total, thus the algorithm takes linear time.
\end{proof}

\newpage
Combined with the linear-time computation of minimal right extensions of \Cref{teo:minimalrightextensions} and with the fact that we can sort them in linear time, we get our second main result.
\begin{corollary}
Given $\msaij{1..m}{1..n}$ from $\Sigma \cup \lbrace \gap \rbrace$, with $\Sigma$ an integer alphabet of size $O(mn)$, the construction of an optimal semi-repeat-free segmentation minimizing the maximum block length can be done in time $O(mn)$.
\end{corollary}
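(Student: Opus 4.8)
The plan is to compose the ingredients already established and add a standard traceback. First I would invoke \Cref{teo:minimalrightextensions} to compute the minimal right extensions $(x,f(x))$ for $0\le x\le n-1$ in $O(mn)$ time; this is the only step that is not linear in $n$, and it is precisely here that the hypothesis of an integer alphabet of size $O(mn)$ is used, since it is what makes the construction of \gst and its supporting tree operations fit in $O(mn)$ time. With the convention $f(x)=n+1$ replacing $f(x)=\infty$, every second component $f(x)$ lies in $[1..n+1]$, so I can sort the $n$ pairs by $f(x)$ with a single counting-sort pass in $O(n)$ time, producing exactly the input format expected by \Cref{alg:minmaxlength}. Running \Cref{alg:minmaxlength} then yields the optimal score in $O(n)$ time by the preceding theorem, for a total of $O(mn)+O(n)+O(n)=O(mn)$.

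Second, since the statement asks for an optimal \emph{segmentation} whereas \Cref{alg:minmaxlength} as written returns only its score, I would weave predecessor pointers into it. For every reachable column $j$ I record $\mathtt{prev}[j]$, a valid last breakpoint of an optimal semi-repeat-free segmentation of $\msaij{1..m}{1..j}$, as follows: if the assignment $\mathtt{minmaxlength}[j]\gets\min(I,S)$ is decided by $S$ (the leader case, which subsumes the branch $\mathtt{C}[I]=0$), set $\mathtt{prev}[j]=j-S$, which is correct because $S$ is incremented by one in every iteration, so at column $j$ it equals $j-x^{*}$ for the leader breakpoint $x^{*}$ that last explicitly updated it; if instead the assignment is decided by $I$ (the non-leader case), set $\mathtt{prev}[j]$ to any $x$ currently counted in bucket $I$, retrievable in $O(1)$ from an auxiliary stack carried by each entry of $\mathtt{C}$, onto which $x_y$ is pushed exactly when $\mathtt{C}[\mathtt{minmaxlength}[x_y]]$ is incremented and off which it is popped exactly when (at column $j=x+\mathtt{minmaxlength}[x]+1$, driven by $\mathtt{L}[j]$) it is decremented. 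Every index is pushed and popped at most once, so this augmentation costs only $O(1)$ amortized time per column; walking back from $j=n$ along $\mathtt{prev}$ to $0$ then recovers the breakpoints $0=j_0<j_1<\cdots<j_b=n$ of an optimal semi-repeat-free segmentation in $O(b)=O(n)$ further time. (If $\mathtt{minmaxlength}[n]=+\infty$, no valid segmentation exists, which the algorithm reports instead.)

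The composition is otherwise a black box, so the one point that warrants care --- and where I would actually spend the argument --- is checking that this traceback is consistent with the non-obvious optimization of \Cref{alg:minmaxlength} in which $I$ is only ever incremented and never recomputed as $\min\{i:\mathtt{C}[i]>0\}$. Concretely: whenever $\mathtt{minmaxlength}[j]=\min(I,S)$ is realized by the term $I$, one must verify that bucket $I$'s stack is non-empty and that every $x$ it contains genuinely satisfies $f(x)\le j\le x+\mathtt{minmaxlength}[x]$, so that $[x+1..j]$ is a legal non-leader last segment of the claimed score; and symmetrically, whenever it is realized by $S<\infty$, that $0\le j-S$ and $f(j-S)\le j$. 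Both follow directly from the invariants already proved for \Cref{alg:minmaxlength} --- the per-bucket stacks mirror the counters in $\mathtt{C}$, which in turn mirror exactly the currently active intervals $[f(x)..x+s(x)]$ together with their scores $s(x)$ --- so no idea beyond this verification is needed, and the overall running time stays $O(mn)$.
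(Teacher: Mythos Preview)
Your composition is exactly the paper's: invoke \Cref{teo:minimalrightextensions}, sort the pairs $(x,f(x))$ in $O(n)$ by counting sort on the second component, and apply the preceding $O(n)$ theorem; the paper says nothing more than this and in particular does not spell out any traceback, so your second and third paragraphs already go beyond what is written there.

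One small technical slip in your traceback: a plain \emph{stack} per entry of $\mathtt{C}$ does not suffice, because when $\mathtt{L}[j]$ triggers the decrement of $\mathtt{C}[s(x)]$ you must remove that \emph{specific} $x$, yet for two indices $x_1<x_2$ landing in the same bucket the push times are ordered by $f(\cdot)$ while the removal times are $x_1+s(x_1)+1<x_2+s(x_2)+1$, and these two orders need not be inverses of each other. Replacing each stack by a doubly linked list, together with a pointer from $x$ to its list node, gives $O(1)$ insertion, $O(1)$ targeted deletion, and $O(1)$ access to an arbitrary witness, so the amortized cost and the overall $O(mn)$ bound are unchanged.
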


\bibliographystyle{splncs04}
\bibliography{biblio}

\end{document}